\def\thisismainpaper{0}   
\newcommand{\nosemic}{\renewcommand{\@endalgocfline}{\relax}}
\newcommand{\dosemic}{\renewcommand{\@endalgocfline}{\algocf@endline}}
\let\oldnl\nl
\newcommand{\nonl}{\renewcommand{\nl}{\let\nl\oldnl}}
\theoremstyle{definition}
\newtheorem{theorem}{Theorem}[section]
\newtheorem{lemma}[theorem]{Lemma}
\newtheorem{definition}{Definition}
\DeclareMathOperator{\argmin}{\arg\min}
\newcommand{\E}{{\rm I\kern-.3em E}}
\newcommand{\conv}{\mathrm{conv}}
\newcommand{\diam}{\mathrm{diam}}
\def\diag{\operatorname{diag}}
\def\uG{\underline{G}}
\def\uV{\underline{V}}
\def\uE{\underline{E}}
\def\up{\underline{p}}
\def\ue{\underline{e}}
\def\umax{u_{\small \max}}
\def\vmax{v_{\small \max}}
\def\Cmin{C_{\small \min}}
\def\ddefloop#1{\ifx\ddefloop#1\else\ddef{#1}\expandafter\ddefloop\fi}
\def\ddef#1{\expandafter\def\csname bb#1\endcsname{\ensuremath{\mathbb{#1}}}}
\def\ddef#1{\expandafter\def\csname c#1\endcsname{\ensuremath{\mathcal{#1}}}}
\def\ddef#1{\expandafter\def\csname v#1\endcsname{\ensuremath{\boldsymbol{#1}}}}
\def\ddef#1{\expandafter\def\csname u#1\endcsname{\ensuremath{\underline{#1}}}}
\def\ddef#1{\expandafter\def\csname v#1\endcsname{\ensuremath{\boldsymbol{\csname #1\endcsname}}}}
\begin{document}

\IEEEoverridecommandlockouts
%
\title{Communication Optimization for Decentralized Learning atop Bandwidth-limited Edge Networks
}
\author{
\IEEEauthorblockN{Tingyang Sun, Tuan Nguyen, and Ting He} 

\IEEEauthorblockA{
Pennsylvania State University, University Park, PA, USA. Email: \{tfs5679,tmn5319,tinghe\}@psu.edu
}
\thanks{This work was supported by the National Science Foundation under award CNS-2106294.}
}

\maketitle

\begin{abstract}
Decentralized federated learning (DFL) is a promising machine learning paradigm for bringing artificial intelligence (AI) capabilities to the network edge. Running DFL on top of edge networks, however, faces severe performance challenges due to the extensive parameter exchanges between agents. 
 Most existing solutions for these challenges were based on simplistic communication models, which cannot capture the case of learning over a multi-hop bandwidth-limited network. In this work, we address this problem by jointly designing the communication scheme for the overlay network formed by the agents and the mixing matrix that controls the communication demands between the agents. By carefully analyzing the properties of our problem, we cast each design problem into a tractable optimization and develop an efficient algorithm with guaranteed performance. Our evaluations based on real topology and data show that the proposed algorithm can  reduce the total training time by over $80\%$ compared to the baseline without sacrificing accuracy, while significantly improving the computational efficiency over the state of the art. 
 \looseness=-1
\end{abstract}

\begin{IEEEkeywords}
Decentralized federated learning, overlay network, mixing matrix design. 
\end{IEEEkeywords}

\section{Introduction}\label{sec:Introduction}

\emph{Decentralized federated learning (DFL)}~\cite{Lian17NIPS} is an emerging machine learning paradigm that allows multiple learning agents to collaboratively learn a shared model from their local data without directly sharing the data. In contrast to the centralized federated learning (FL) paradigm~\cite{McMahan17AISTATS}, DFL gets rid of parameter servers by letting the learning agents directly exchange model updates with their neighbors through peer-to-peer connections, which are then aggregated locally~\cite{Kairouz21book}. Since its introduction, DFL has attracted significant attention due to its robustness against a single point of failure and ability to balance the communication complexity across nodes without increasing the computational complexity~\cite{Lian17NIPS}.

Meanwhile, DFL still faces significant performance challenges due to the extensive data transfer between agents. As in FL, the agents need to communicate repeatedly to exchange local model updates until reaching global convergence, which often incurs a substantial communication cost due to the large model size. Such communication cost can dominate the total cost of the learning task, 
and the problem is exacerbated in edge networks that are more bandwidth-limited than datacenter networks~\cite{chen2022federated}. 
Tremendous efforts have been devoted to reducing the communication cost, including model compression such as~\cite{Compression1}, 
optimization of communication-controlling hyperparameters such as ~\cite{Chiu23JSAC}, 
and adaptive communication such as~\cite{Singh21JSAIT}. 
These approaches are compatible with each other and thus can be combined. In this work, we will focus on the optimization of a particular type of hyperparameter called \emph{mixing matrix} that controls the communications in DFL. 

\begin{figure}[t!]
\vspace{-.0em}
\centerline{\mbox{\includegraphics[width=.7\linewidth]{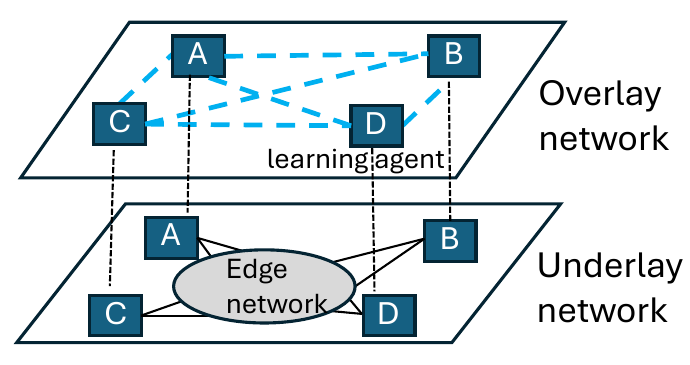}}}
\vspace{-1em}
\caption{\small Overlay-based DFL atop edge network. }
\label{fig:overlay_underlay}
\vspace{-.5em}
\end{figure}

As explained later in Section~\ref{subsec:Model of Decentralized Learning}, only agent pairs corresponding non-zero entries in the mixing matrix need to communicate during DFL, allowing the mixing matrix to control the communication demands between agents. This observation has triggered a series of studies on how to optimally design the mixing matrix such as \cite{MATCHA19,Chiu23JSAC,hua2022efficient,le2023refined}. However, most existing works made the simplistic assumption that each pair of \emph{logically adjacent} agents are also \emph{physically adjacent} in the underlying communication network. 
This assumption has led to simplistic models of the communication cost under a given design, e.g., the maximum degree~\cite{hua2022efficient,le2023refined}, the minimum number of matchings~\cite{MATCHA19}, or other functions of the communication graph~\cite{Chiu23JSAC}. The underlying assumption of all these cost models is that the communication cost (e.g., time) for a given agent-to-agent communication graph only depends on the topology of this graph. 
However, this assumption may not hold in practice, as agents typically communicate through an underlying communication network and the cost for a given set of communications also depends on the state (e.g., topology, routing, link capacities) of this network. In this work, we study the communication optimization for such \emph{overlay-based DFL} over bandwidth-limited edge networks.

As illustrated in Fig.~\ref{fig:overlay_underlay}, in overlay-based DFL, the agents form an \emph{overlay network} with logical connections that indicate which pairs of agents are allowed to communicate, and the underlying communication network serves as an \emph{underlay network} that connects the agents through possibly multi-hop paths. This scenario fundamentally differs from the scenarios studied in previous works in that seemingly disjoint links in the overlay may map to routing paths in the underlay that share (underlay) links. Ignoring such link sharing can cause incorrect prediction of the communication cost under a given design and thus suboptimal designs. For example, the overlay links $(A,B)$ and $(C,D)$ in Fig.~\ref{fig:overlay_underlay} may seem disjoint with a capacity of $C$ each, but concurrently sending two messages of size $\kappa$ over them can take more than $\kappa/C$ time if the corresponding routing paths share the same bottleneck link in the underlay. 

In this work, we study \emph{underlay-aware} communication design for overlay-based DFL in the context of edge networks. Federated learning has been applied in many edge networks such as HetNets~\cite{chen2022federated}, device-to-device networks~\cite{xing2021federated}, IoT networks~\cite{pinyoanuntapong2022toward}, underwater networks~\cite{pei2023fed}, and power line communication networks~\cite{jia2022dispatching}. Compared to other network environments such as inter-datacenter networks \cite{marfoq2020throughput}, edge networks have unique characteristics such as low bandwidth and low propagation delay that lead to different design considerations as explained later (Section~\ref{subsec:Communication Schedule Optimization}). We will build upon our recent discoveries in network tomography~\cite{Huang23MobiHoc} and mixing matrix design~\cite{Chiu23JSAC,Huang24MobiHoc} to develop an overlay-based solution that only requires the participation of the learning agents, making our solution easily deployable in public networks.

\subsection{Related Work}\label{subsec:Related Work}

\textbf{Decentralized federated learning.} Initially proposed under a centralized architecture \cite{McMahan17AISTATS}, FL was later extended to a fully decentralized architecture~\cite{Lian17NIPS}, which was shown to achieve the same computational complexity but a lower communication complexity. 
Since then a number of improvements have been developed such as \cite{ICMLhonor}, but these works only focused on reducing the number of iterations. \looseness=-1
%

\textbf{Communication cost reduction.} 
There are two general approaches for reducing the communication cost in FL: reducing the amount of data per communication through compression, e.g., \cite{Compression1}, 
and reducing the number of communications, e.g.,  \cite{sysml19,Wang19JSAC}. 
The two approaches can be combined for further improvement \cite{Singh20CDC,Singh21JSAIT}. 
Instead of either activating all the links or activating none, it has been recognized that better efficiency can be achieved by activating subsets of links. To this end, \cite{Singh20CDC,Singh21JSAIT} proposed an event-triggered mechanism and \cite{MATCHA19,Chiu23JSAC} proposed to activate links with predetermined probabilities. 
In this regard, our work designs predetermined link activation as in \cite{MATCHA19,Chiu23JSAC}, which provides more predictable performance than event-triggered mechanisms, but {we consider a cost model tailored to overlay-based DFL}: instead of measuring the communication time by the number of matchings  \cite{MATCHA19,Chiu23JSAC} or the maximum degree  \cite{hua2022efficient,le2023refined}, we use the actual time to complete the activated agent-to-agent communications over a bandwidth-limited underlay with possibly shared links. 

\textbf{Topology design in DFL.} The logical topology defining the neighborhoods of learning agents is an important design parameter in DFL. The impact of this topology on the convergence rate of DFL has been mostly captured through the {spectral gap} of the mixing matrix~\cite{Lian17NIPS,Nedic18IEEE,Neglia19INFOCOM,neglia2020decentralized,jiang2023joint} or equivalent parameters~\cite{MATCHA19}. 
Although recent works have identified other parameters that can impact the convergence rate, such as the effective number of neighbors~\cite{vogels2022beyond} and the neighborhood heterogeneity~\cite{le2023refined}, these results just pointed out additional factors and did not invalidate the impact of spectral gap. 
Based on the identified convergence parameters, several solutions have been proposed to design the logical topology to balance the convergence rate and the cost per communication round~\cite{MATCHA19,Chiu23JSAC,le2023refined}, and some solutions combined topology design with other optimizations (e.g., bandwidth allocation~\cite{Wang22Networking}, model pruning~\cite{jiang2023joint}) for further improvement. 
Our work also includes topology design based on a parameter related to the spectral gap, but we consider the joint optimization of the topology and the routing within the overlay in an overlay-underlay network. 

\textbf{Overlay-based DFL.} 
To our knowledge, the only existing works addressing overlay-based DFL 
are \cite{marfoq2020throughput,Huang24MobiHoc} in different network environments. Specifically, \cite{marfoq2020throughput} considered an inter-datacenter network as the underlay where the paths between agents only share links at the first and the last hops, and \cite{Huang24MobiHoc} is our previous work that considered a general multi-hop underlay with arbitrary link sharing. 
This work is closest to \cite{Huang24MobiHoc} with \emph{two important differences}: (i) this work focuses on edge networks where propagation delays are negligible compared to transmission delays, which greatly simplifies the communication optimization (see Section~\ref{subsubsec:Improved Routing Solution}), and (ii) \cite{Huang24MobiHoc} only gave a heuristic algorithm for topology design but this work develops an algorithm with performance guarantee (see Section~\ref{subsec:Link Activation Optimization}). 
We note that a seemingly related work \cite{hua2022efficient} is agnostic to the actual communications in an underlay network, thus not really addressing the overlay setting. 
\looseness=-1

\subsection{Summary of Contributions}

We study the joint optimization of the communication scheme and the communication-controlling hyperparameter for overlay-based DFL atop bandwidth-limited edge networks, with the following contributions:\looseness=-1
\begin{enumerate}
\item  We decompose the overall problem into a subproblem of communication optimization within the overlay network and another subproblem of mixing matrix design for DFL. Using unique characteristics of edge networks, we show that equal bandwidth sharing is optimal under a given overlay routing solution, based on which we simplify the overlay routing problem from a nonlinear optimization to a linear optimization that can minimize the time per iteration under a given mixing matrix design. 


\item We tackle the mixing matrix design problem via sparse convex optimization. By carefully designing the objective function and the solution space, we develop a Frank-Wolfe-type algorithm with guaranteed performance. We also introduce additional steps to further optimize the weights and the search space. 

\item We evaluate the proposed solution in comparison with benchmarks based on parameters from a real edge network. Our results show that: (i) our design can substantially reduce the total training time compared to the baseline without sacrificing the quality of the trained model, and (ii) our proposed algorithm matches the training performance of the state-of-the-art solution with a much lower complexity. 
\end{enumerate}

\textbf{Roadmap.} Section~\ref{sec:Background and Formulation} describes our problem, Section~\ref{sec:Proposed Solution} presents our solution and analysis, Section~\ref{sec:Performance Evaluation} presents our performance evaluation, and Section~\ref{sec:Conclusion} concludes the paper. 
\if\thisismainpaper1
\textbf{All the proofs can be found in Appendix of \cite{Sun25arXiv}.}
\else
\textbf{All the proofs can be found in Appendix~\ref{appendix:Proofs}.} 
\fi


\section{Problem Formulation}\label{sec:Background and Formulation}

\subsection{Notations}\label{subsec:Notations}

Let $\bm{a}\in \mathbb{R}^m$ denote a vector and $\bm{A}\in \mathbb{R}^{m\times m}$ a matrix. We use $\|\bm{a}\|$ to denote the $\ell$-2 norm, and $\|\bm{A}\|$ to denote the spectral norm. We use $\diag(\bm{a})$ to denote a diagonal matrix with the entries in $\bm{a}$ on the main diagonal. We use $\umax(\bm{A})$ and $\vmax(\bm{A})$ to denote the left/right singular vector of $\bm{A}$ corresponding to its largest singular value.  \looseness=-1

\subsection{Network Model}\label{subsec:Network Model}

Consider a network of $m$ learning agents connected through a \emph{base topology} $G=(V, E)$ ($|V|=m$), that forms an overlay on top of a communication underlay $\uG=(\uV, \uE)$ (with $V\subseteq \uV$). For the simplicity of presentation, we assume the overlay to be fully connected, i.e., $E$ contains the links between each pair of nodes in $V$, which is feasible as long as the underlay is connected, but our solution can be easily adapted for non-fully-connected overlays\footnote{This can be achieved by adding linear constraints that force the mixing matrix entries corresponding to non-existing overlay links to zero.}.  
Each underlay link $\ue\in \uE$ has a finite capacity $C_{\ue}$. 
Each overlay link $e=(i,j)\in E$ is implemented via a routing path $\up_{i,j}$ from the node running agent $i$ to the node running agent $j$ in the underlay. 
Let $l_{i, j}$ denote the propagation delay on $\up_{i, j}$. 
For simplicity, we consider both the overlay and the underlay as \emph{undirected} graphs. This effectively means that each underlay link is assumed to have equal capacity in both directions, and each overlay link is assumed to map to symmetric paths (i.e., $\up_{i,j}=\up_{j,i}$), but our solution can be adapted for directed overlay/underlay links as in \cite{Huang24MobiHoc} to model asymmetric capacities and asymmetric routing. 
We assume that \emph{only the overlay nodes in $V$ are controllable}, and the internal nodes in the underlay (i.e., $\uV\setminus V$) are just communication devices (e.g.,  WiFi access points or base stations) and uncontrollable by the learning task. 

\subsection{Learning Task}\label{subsec:Model of Decentralized Learning}

We consider a DFL task, where
each agent $i\in V$ has a possibly non-convex objective function $F_i(\bm{x})$ that depends on the model parameter vector $\bm{x} \in \mathbb{R}^d$ and the local dataset $\mathcal{D}_i$, and the goal is to find the parameter vector $\bm{x}$ that minimizes the global objective function $F(\bm{x})$ defined as\looseness=-1
\begin{align}\label{eq:F(x)}
    F(\boldsymbol{x}) := \frac{1}{m}\sum_{i=1}^{m} F_{i}(\boldsymbol{x}).
\end{align}
For example, we can model the objective of empirical risk minimization by defining the local objective as $F_i(\bm{x}):= \sum_{\bm{s}\in \mathcal{D}_i}\ell(\bm{x},\bm{s})$, where $\ell(\bm{x},\bm{s})$ is the loss function for sample $\bm{s}$ under model $\bm{x}$, and the corresponding global objective is proportional to the empirical risk over all the samples.  
%
%

Suppose that the task is performed by a standard decentralized training algorithm called D-PSGD~\cite{Lian17NIPS}, where each agent repeatedly updates its own parameter vector by SGD and aggregates it with the parameter vectors of its neighbors. Specifically, let $\bm{x}_i^{(k)}$ ($k\geq 1$) denote the parameter vector at agent $i$ after $k-1$ iterations and $g(\bm{x}_i^{(k)}; \xi_i^{(k)})$ the stochastic gradient computed by agent $i$ in iteration $k$, where $\xi_i^{(k)}$ is the mini-batch. In iteration $k$, agent $i$ updates its parameter vector by\looseness=-1
\begin{align}\label{eq:DecenSGD}
\boldsymbol{x}^{(k+1)}_i = \sum_{j=1}^{m}W_{ij}\boldsymbol{x}^{(k)}_j - \eta g(\boldsymbol{x}^{(k)}_i; \xi^{(k)}_i),
\end{align}
where $\bm{W}=(W_{ij})_{i,j=1}^m$ is the $m\times m$ \emph{mixing matrix} in iteration $k$, and $\eta>0$ is the learning rate. 
The update rule in \eqref{eq:DecenSGD} has the same convergence performance as an alternative rule $\boldsymbol{x}^{(k+1)}_i = \sum_{j=1}^{m}W_{ij}(\boldsymbol{x}^{(k)}_j - \eta g(\boldsymbol{x}^{(k)}_j; \xi^{(k)}_j))$~\cite{Lian17NIPS,MATCHA19}, but \eqref{eq:DecenSGD} allows each agent to {parallelize the parameter exchange and the gradient computation}. Since in bandwidth-limited networks, the communication time dominates the computation time~\cite{Luo20MLsys}, the time per iteration according to \eqref{eq:DecenSGD} is determined by the communication time in support of computing $\sum_{j=1}^{m}W_{ij}\boldsymbol{x}^{(k)}_j$. 



\subsection{Design Parameter}\label{subsec:Design Parameter}

The main parameter we focus on is the mixing matrix $\bm{W}$, which directly controls the communication demands as agent $j$ needs to send its parameter vector to agent $i$ if and only if $W_{ij}\neq 0$. According to \cite{Lian17NIPS}, the mixing matrix should be \emph{symmetric with each row/column summing up to one}\footnote{Originally, the mixing matrix was assumed to be symmetric and \emph{doubly stochastic} with entries constrained to $[0,1]$ \cite{Lian17NIPS}, but we find this requirement unnecessary for our adopted convergence bound, which only requires the mixing matrix to be symmetric with each row/column summing up to one.} in order to ensure convergence for D-PSGD. 
The symmetry implies a one-one correspondence between distinct off-diagonal entries in $\bm{W}$ and the overlay links in $E$, and thus $W_{ij}$ can be interpreted as the \emph{link weight} of the overlay link $(i,j)\in E$. Specifically, the requirement of each row summing to one implies that $W_{ii} = 1-\sum_{j=1}^m W_{ij}$. In the vector form, this implies the following form of the mixing matrix 
\begin{align}
\bm{W} = \bm{I} - \bm{B} \diag(\bm{\alpha})\bm{B}^\top, \label{eq:W}
\end{align}
where $\bm{I}$ is the $m\times m$ identity matrix, $\bm{B}$ is the $|V|\times|E|$ incidence matrix\footnote{This is defined under an arbitrary orientation of each link $e_j\in E$ as $B_{i j}= +1$ if $e_j$ starts from $i$, $-1$ if $e_j$ ends at $i$, and $0$ otherwise.} for the base topology $G$, and $\bm{\alpha}:=(\alpha_{ij})_{(i,j)\in E}$ is the vector of (overlay) link weights. It is easy to see that $W_{ij} = W_{ji} = \alpha_{ij}$. By \eqref{eq:W}, the design of mixing matrix contains two decisions: (i) 
the decision of \emph{which overlay links should be activated} (e.g., having non-zero weights), and (ii) the decision of \emph{how much weight to assign to each activated link}. Agents $i$ and $j$ need to exchange parameter vectors \emph{if and only if} link $(i,j)$ is activated (i.e., $\alpha_{ij}\neq 0$).

\subsection{Design Objective}
\label{subsec:Goal: Network-aware Communication Optimization for Decentralized Learning}

Our goal is to jointly optimize both \emph{the mixing matrix} and \emph{the communication scheme} to serve the demands triggered by the mixing matrix, in order to \emph{minimize the total (wall-clock) time} for the learning task to reach a given level of convergence. The former is an application-layer parameter, and the latter is a network-layer parameter, making our problem a cross-layer optimization. However, as the internal nodes in the underlay are uncontrollable, our solution is limited to actions at only overlay nodes. 

\begin{figure}[t!]
\vspace{-.0em}
\centerline{\mbox{\includegraphics[width=.7\linewidth]{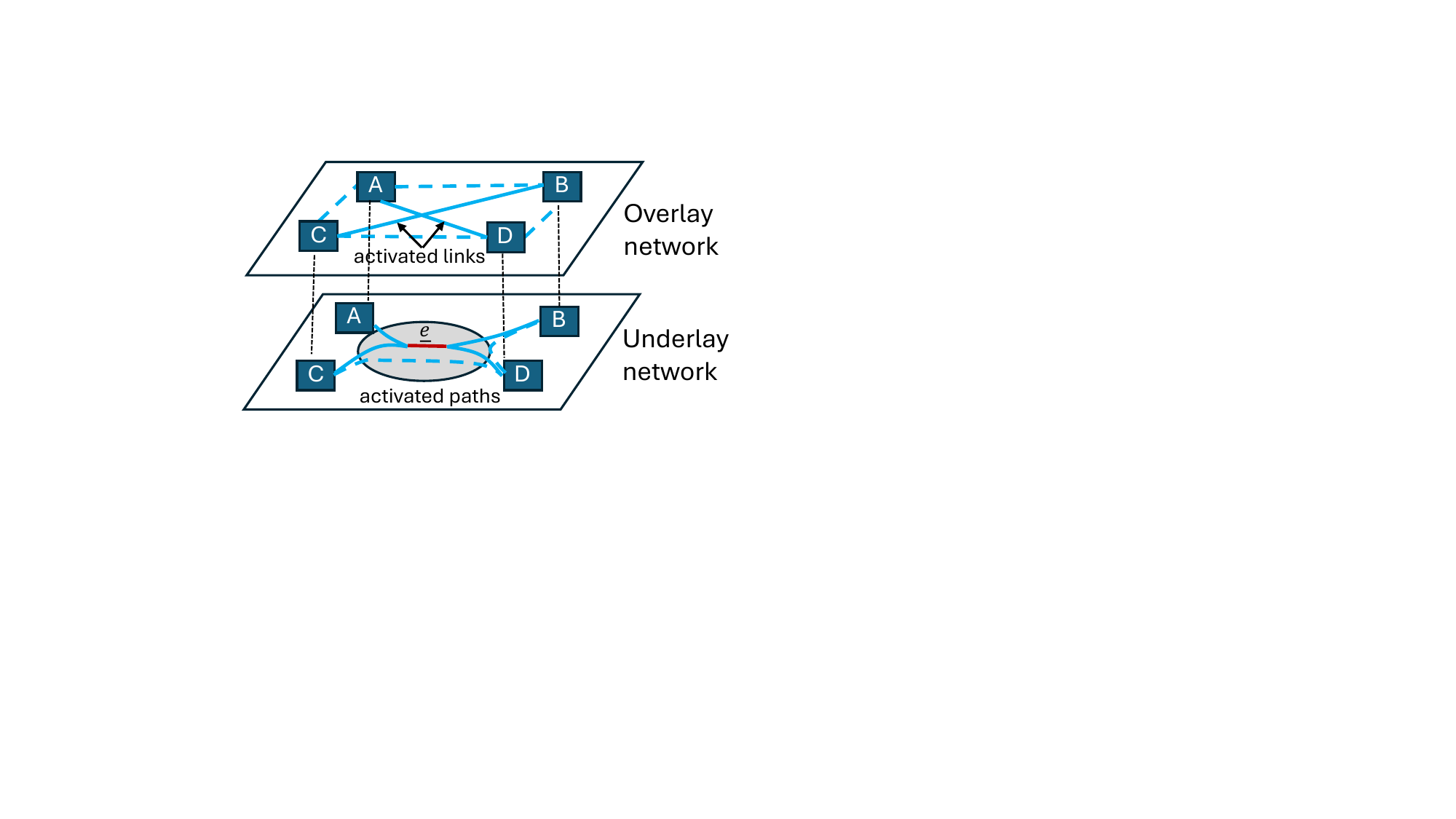}}}
\vspace{-1em}
\caption{\small Example: Overlay routing through B-D-C can accelerate communication by bypassing the shared bottleneck link $\ue$.}
\label{fig:overlay_underlay_routing}
\vspace{-.25em}
\end{figure}

\emph{Remark:} Even if routing inside the underlay is uncontrollable, it is possible to improve performance by controlling only the overlay nodes. For example, in the scenario of Fig.~\ref{fig:overlay_underlay_routing}, if the activated overlay links are $E_a = \{(A,D), (B,C)\}$, directly using the underlay routing paths $\up_{A,D}$ and $\up_{B,C}$ will take a long time to complete the parameter exchange as they share a bottleneck link $\ue$ in the underlay. However, by forwarding the flow between $B$ and $C$ through node $D$, we can bypass the shared bottleneck and allow each flow to complete faster.

\section{Proposed Solution}\label{sec:Proposed Solution}

We will first consider the simpler problem of minimizing the per-iteration time under a given mixing matrix and then address the more complicated problem of optimizing the mixing matrix to balance the per-iteration time and the number of iterations till convergence. 
%
\looseness=-1

\subsection{Overlay Communication Optimization}\label{subsec:Communication Schedule Optimization}

Let $E_a:=\{(i,j)\in E:\: W_{ij}\neq 0\}$ denote the set of activated overlay links under a given mixing matrix. As illustrated by Fig.~\ref{fig:overlay_underlay_routing}, there is room for improvement by optimizing how the overlay nodes collectively serve the communication demands triggered by $E_a$.

\subsubsection{Baseline Solution}

Let $\kappa_i$ denote the size of the parameter vector at agent $i$. Instead of treating the communication demands as a set of unicast flows, with two flows in opposite directions for each activated link $(i,j)\in E_a$, we have shown in \cite{Huang24MobiHoc} that it is more efficient to combine all the flows originating from the same agent $i$ into a single multicast flow disseminating the latest parameters of agent $i$ to other agents it needs to share the parameters with. Thus, the communication demands triggered by a given set of activated links $E_a$ is
\begin{align}\label{eq:demands H}
    H=\{(i,N_{E_a}(i),\kappa_i):\: \forall i\in V \mbox{ with } N_{E_a}(i)\neq \emptyset\},
\end{align} 
where $N_{E_a}(i):=\{j\in V:\: (i,j)\in E_a\}$ is the set of activated neighbors of agent $i$, and each $h=(s_h,T_h,\kappa_h)\in H$ represents a multicast flow with source $s_h$, destinations $T_h$, and data size $\kappa_h$. 

To minimize the total training time, the communication scheme should minimize the time for completing all the demands in $H$ within the control of the overlay. Under the network model in Section~\ref{subsec:Network Model}, this has been achieved in \cite{Huang24MobiHoc} by solving a \emph{mixed integer convex programming (MICP)} problem, summarized below for completeness. 
Let the routing decision be denoted by $z^h_{ij}\in \{0,1\}$ that indicates whether overlay link $(i,j)$ is traversed\footnote{In all the routing-related variables (i.e., $z^h_{ij}, r^{h,k}_{ij}, f^h_{ij}$), the corresponding overlay link $(i,j)$ should be interpreted as a directed link as the flow traversal is directional. Similarly, the link capacity constraint \eqref{min-time:link capacity} should be imposed for each direction of each underlay link.} by the multicast flow $h$ and $r^{h,k}_{ij}\in \{0,1\}$ that indicates whether $(i,j)$ is traversed by the flow from $s_h$ to $k\in T_h$. 
Let the rate decision be denoted by $d_h\geq 0$ that denotes the rate of flow $h$ and $f^h_{ij}\geq 0$ that denotes the rate of flow $h$ on overlay link $(i,j)$. Define constant $b^{h,k}_i$ as $1$ if $i=s_h$, $-1$ if $i=k$, and $0$ otherwise. 
The time in completing all the multicast flows in $H$ \eqref{eq:demands H} can be minimized through the following optimization: 
\begin{subequations}\label{eq:min-time}
\begin{align}
   \min_{\bm{z},\bm{r},\bm{d},\bm{f}} &\quad \tau\\
   \mbox{s.t.} &\quad  \tau \geq {\kappa_h\over d_h}+ \sum_{(i,j)\in E}l_{i, j} r_{ij}^{h,k} ,~~\forall h\in H, k\in T_h,
   \label{min-time:time}\\
&   \sum_{(i,j)\in E: \ue\in \up_{i,j}}\sum_{h\in H} f^h_{ij}\leq C_{\ue},~~\forall \ue\in \uE,   \label{min-time:link capacity}\\   
   & \hspace{-2em}\sum_{j\in V}r^{h,k}_{ij} = \sum_{j\in V}r^{h,k}_{ji} + b^{h,k}_i, ~~ \forall h\in H, k\in T_h, i\in V, \label{min-time:flow conservation}\\
   & r^{h,k}_{ij}\leq z^h_{ij}, ~~ \forall h\in H, k\in T_h, (i,j)\in E, \label{min-time:tree} \\
   &\hspace{-1em} d_h-M(1-z^h_{ij})\leq f^h_{ij}\leq d_h,~\forall h\in H, (i,j)\in E, \label{min-time:big-M}\\
   &f_{ij}^h \le M z_{ij}^h,~\forall h\in H, (i,j)\in E,\label{noFlow:big-M}\\ 
   & \hspace{-0em} r^{h,k}_{ij}, z^h_{ij}\in \{0,1\},~d_h\in [0,M],~f^h_{ij}\geq 0,\nonumber\\
   & \hspace{8em}\forall h\in H, k\in T_h, (i,j)\in E, \label{min-time:bound}
\end{align}
\end{subequations}
where $M$ is an upper bound on $d_h$ ($\forall h\in H$). 
Constraint \eqref{min-time:time} ensures $\tau$ as an upper bound on the completion time; \eqref{min-time:link capacity} enforces the capacity at each underlay link; \eqref{min-time:flow conservation}--\eqref{min-time:tree} are the \emph{Steiner arborescence} constraints \cite{Goemans93Networks} that guarantee the set of overlay links with $z^h_{ij}=1$ will form a directed Steiner tree from $s_h$ to each $k\in T_h$; \eqref{min-time:big-M}--\eqref{noFlow:big-M} ensure that $f^h_{ij}=d_h z^h_{ij}$, which allows the capacity constraint to be written as a linear inequality \eqref{min-time:link capacity}. 
The optimal solution $(\bm{z}^*, \bm{r}^*, \bm{d}^*, \bm{f}^*)$ to \eqref{eq:min-time} thus provides an overlay communication scheme that minimizes the communication time under a given set of activated links.

\emph{Remark:} 
The optimization \eqref{eq:min-time} is a MICP problem with $O(|V|^2 |E|)$ variables  and $O(|\uE| + |V|^2(|V|+|E|))$ constraints, which is challenging to solve for large networks. 

\subsubsection{Improved Solution}\label{subsubsec:Improved Routing Solution} 

In the application scenario of DFL over an edge network, the underlay spans a relatively small area, making the propagation delay $l_{i,j}$ negligible compared to the transmission delay. Moreover, since all the agents are training the same model, the sizes of local parameter vectors will be identical (in absence of compression\footnote{Even if compression is used, we can still set $\kappa_i\equiv \kappa$ in the communication optimization, where $\kappa$ denotes the maximum compressed model size, to obtain a guaranteed per-iteration time.}), i.e., $\kappa_i\equiv \kappa$ ($\forall i\in V$). 
In this scenario, we can express the objective as a closed-form function of the routing variables as follows.

\begin{lemma}\label{lem:equal bandwidth allocation}
Define a \emph{unicast flow activated by $z^h_{ij}=1$} as a flow in the underlay carrying the content of $h\in H$ from agent $i$ to agent $j$. 
Given a routing solution $\bm{z}$, define 
\begin{align}
t_{\ue}:= \sum_{(i,j): \ue\in \up_{ij}}\sum_{h\in H}z^h_{ij}
\end{align}
as the number of {activated unicast flows} 
traversing an underlay link $\ue\in \uE$. 
If $l_{i,j}=0$ $\forall (i,j)\in E$ and $\kappa_h\equiv \kappa$ $\forall h\in H$, then the optimal objective value of \eqref{eq:min-time} under $\bm{z}$ is 
\begin{align}\label{eq:tau - special case, per-link}
\tau = \max_{\ue\in \uE} {\kappa t_{\ue}\over  C_{\ue} },
\end{align}
achieved by \emph{equally sharing the bandwidth} at every underlay link among the activated unicast flows traversing it. 
\end{lemma}

Lemma~\ref{lem:equal bandwidth allocation} implies that for $l_{i,j}=0$ $\forall (i,j)\in E$ and $\kappa_h\equiv \kappa$ $\forall h\in H$, \eqref{eq:min-time} is reduced to the optimization of overlay routing $\bm{z}$, after which the flow rates can be easily determined as $d_h\equiv \min_{\ue\in \uE} C_{\ue}/t_{\ue}$ $\forall h\in H$. 
The reduced optimization has a much simpler form as follows:
\begin{subequations}\label{eq:min-time simple}
\begin{align}
   \min_{\bm{z},\bm{r}} &\quad \tau\\
   \mbox{s.t.} &\quad  \tau \geq {\kappa\over C_{\ue}} \sum_{(i,j): \ue\in \up_{ij}}\sum_{h\in H}z^h_{ij},~~\forall \ue\in \uE,   \label{simple min-time:time}\\ 
   & \mbox{\eqref{min-time:flow conservation}--\eqref{min-time:tree}} \\   
   & \hspace{-0em} r^{h,k}_{ij}, z^h_{ij}\in \{0,1\},~~\forall h\in H, k\in T_h, (i,j)\in E, \label{simple min-time:bound}
\end{align}
\end{subequations}
Although \eqref{eq:min-time simple} has the same order of variables/constraints as \eqref{eq:min-time}, it has a qualitative difference that all the constraints are linear, making \eqref{eq:min-time simple} a \emph{mixed integer linear programming (MILP)} problem that is much easier to solve numerically than \eqref{eq:min-time}. 

\emph{Remark:} Besides simplifying the computation, Lemma~\ref{lem:equal bandwidth allocation} also implies that if each activated unicast flow is implemented as a TCP flow (using the same congestion control algorithm), then the minimum completion time will be automatically achieved as long as the routing is optimal.

\subsubsection{Handling Uncooperative Underlay}

When learning over a third-party-managed network, the overlay cannot directly solve \eqref{eq:min-time simple}, because the constraint \eqref{simple min-time:time} requires the knowledge of the internal state of the underlay (routing paths and link capacities). In absence of such knowledge, we can leverage a result from \cite{Huang23MobiHoc} to convert this constraint into an equivalent form that can be consistently estimated by the overlay. To this end, we introduce the following notion from \cite{Huang23MobiHoc}.   

\begin{definition}[\cite{Huang23MobiHoc}]\label{def: category}
A \textbf{category of underlay links $\Gamma_F$} for a given set of overlay links $F$ ($F\subseteq E$) is the set of underlay links traversed \emph{by and only by} the underlay routing paths for the overlay links in $F$, i.e, 
\begin{align}\label{eq:category}
\Gamma_F\coloneqq \Big(\bigcap_{(i,j)\in {F}}\up_{i,j}\Big) \setminus \Big(\bigcup_{(i,j)\in E\setminus F}\up_{i,j}\Big). 
\end{align}
\end{definition}

The key observation is that since all the underlay links in the same category are traversed by the same set of overlay links, they must carry the same traffic load from the overlay. Therefore, we can reduce the completion time formula in Lemma~\ref{lem:equal bandwidth allocation} to a formula based on category-level information. 

\begin{lemma}\label{lem:equal bandwidth allocation - category}
Given a routing solution $\bm{z}$, define
\begin{align}
t_F := \sum_{(i,j)\in F}\sum_{h\in H}z^h_{ij} 
\end{align}
as the number of activated unicast flows traversing the links in category $\Gamma_F$. If $l_{i,j}=0$ $\forall (i,j)\in E$ and $\kappa_h\equiv \kappa$ $\forall h\in H$, then the optimal objective value of \eqref{eq:min-time} under $\bm{z}$ is
\begin{align}\label{eq:tau - special case, per-category}
\tau = \max_{F\in \mathcal{F}} {\kappa t_F\over  C_F},
\end{align}
achieved by \emph{equally sharing the bandwidth} at every underlay link among the activated unicast flows traversing it, where $\mathcal{F}:=\{F\subseteq E: \Gamma_F\neq \emptyset\}$ and $C_F\coloneqq \min_{\ue \in \Gamma_F}C_{\ue}$.
\end{lemma}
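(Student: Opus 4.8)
The plan is to reduce the claim to Lemma~\ref{lem:equal bandwidth allocation} (the per-link version, already established) via a purely combinatorial identity between the per-link bottleneck $\min_{\ue\in\uE} C_{\ue}/t_{\ue}$ and the per-category bottleneck $\min_{F\in\mathcal{F}} C_F/t_F$, using throughout the convention $x/0=+\infty$.

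First I would invoke Lemma~\ref{lem:equal bandwidth allocation}: under $l_{i,j}\equiv 0$, $\kappa_h\equiv\kappa$ and the fixed routing $\bm{z}$, the optimal value of \eqref{eq:min-time} is $\kappa/\min_{\ue\in\uE} C_{\ue}/t_{\ue}$, attained by equally sharing the bandwidth at each underlay link among the activated unicast flows traversing it. Hence it suffices to prove $\min_{\ue\in\uE} C_{\ue}/t_{\ue} = \min_{F\in\mathcal{F}} C_F/t_F$, since the stated value of $\tau$ and the equal-sharing achievability then transfer verbatim.

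The crux is the structural fact that the nonempty categories $\{\Gamma_F:F\in\mathcal{F}\}$ partition $\uE$: each $\ue\in\uE$ lies in exactly one of them, namely $\Gamma_{F(\ue)}$ with $F(\ue):=\{(i,j)\in E:\ue\in\up_{ij}\}$, which is immediate from Definition~\ref{def: category}. Now fix $F\in\mathcal{F}$ and $\ue\in\Gamma_F$; by the definition of the category we have $\ue\in\up_{ij}\iff(i,j)\in F$ for every $(i,j)\in E$, so in the count defining $t_{\ue}$ the condition ``$\ue\in\up_{ij}$'' may be replaced by ``$(i,j)\in F$'', which gives $t_{\ue}=t_F$; that is, $t_{\ue}$ is constant and equal to $t_F$ throughout $\Gamma_F$. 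Therefore $\min_{\ue\in\Gamma_F} C_{\ue}/t_{\ue}=(\min_{\ue\in\Gamma_F}C_{\ue})/t_F=C_F/t_F$, and minimizing over all categories (which, by the partition property, exhausts every underlay link) yields $\min_{\ue\in\uE}C_{\ue}/t_{\ue}=\min_{F\in\mathcal{F}}C_F/t_F$. Underlay links carrying no activated flow --- equivalently, those lying in a category with $t_F=0$ --- contribute $+\infty$ on both sides and are harmless.

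The whole argument is bookkeeping once Lemma~\ref{lem:equal bandwidth allocation} is in hand; the only point needing a little care is aligning the index sets of the two minimizations, which the $+\infty$ convention and the partition property settle. A self-contained alternative of comparable length is to bypass Lemma~\ref{lem:equal bandwidth allocation} and mirror its proof directly with the per-category capacity constraint \eqref{eq:capacity constraint - category} in place of the per-link constraint \eqref{min-time:link capacity} (the two being equivalent): reduce $\tau$-minimization to maximizing $\min_{h\in H}d_h$, derive the upper bound $\min_{h\in H}d_h\le C_{F^*}/t_{F^*}$ at a bottleneck category $F^*\in\argmin_{F\in\mathcal{F}}C_F/t_F$, and match it with the equal-sharing allocation $d_h\equiv\min_{F\in\mathcal{F}}C_F/t_F$, which satisfies every per-category (hence every per-link) constraint.
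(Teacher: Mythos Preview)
Your proof is correct and follows essentially the same route as the paper: reduce to Lemma~\ref{lem:equal bandwidth allocation}, observe that $t_{\ue}=t_F$ for every $\ue\in\Gamma_F$, use $C_F=\min_{\ue\in\Gamma_F}C_{\ue}$ to get $\min_{\ue\in\Gamma_F}C_{\ue}/t_{\ue}=C_F/t_F$, and then partition the global minimum over $\uE$ by categories. Your treatment is slightly more careful (explicit partition argument and the $x/0=+\infty$ convention for empty categories), and the alternative self-contained route you sketch is also sound, but neither departs from the paper's approach in substance.
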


Plugging the result of Lemma~\ref{lem:equal bandwidth allocation - category} into \eqref{simple min-time:time} yields a MILP similar to \eqref{eq:min-time simple}, except that \eqref{simple min-time:time} becomes
\begin{align}\label{eq:simple min-time, category}
\tau\geq {\kappa\over C_F} \sum_{(i,j)\in F}\sum_{h\in H}z^h_{ij}, ~~\forall F\in\mathcal{F}.
\end{align}
The benefit of the new formulation is that instead of requiring detailed internal knowledge about the underlay (i.e., $(\up_{i,j})_{(i,j)\in E}$ and $(C_{\ue})_{\ue\in \uE}$) as in \eqref{simple min-time:time}, the new constraint \eqref{eq:simple min-time, category} only requires the knowledge of the \emph{nonempty categories $\mathcal{F}$} and the corresponding \emph{bottleneck capacity in each category $(C_F)_{F\in\mathcal{F}}$}, both of which can be estimated consistently by the overlay using algorithms from \cite{Huang23MobiHoc}. 
 Given the inferred parameters $\widehat{\mathcal{F}}$ and $(\widehat{C}_F)_{F\in \widehat{\mathcal{F}}}$, we can simply plug them in place of $\mathcal{F}$ and $(C_F)_{F\in\mathcal{F}}$ to compute an overlay routing solution.  

\emph{Remark:} While solving the MILP to optimality can incur super-polynomial complexity, we note that this optimization only needs to be solved once at the beginning of the learning task (by the task orchestrator~\cite{Huang24MobiHoc}), and thus such computation overhead is usually tolerable.

\subsection{Mixing Matrix Design}\label{subsec:Mixing Matrix Design}

As explained in Section~\ref{subsec:Design Parameter}, the mixing matrix design contains two decisions: (i) the design of the links to activate, i.e., $E_a$, and (ii) the design of the weight for each activated link, i.e., $(\alpha_{ij})_{(i,j)\in E_a}$. Since the latter is already solved in \cite{Huang24MobiHoc}, we will just summarize the result for completeness and then focus on the former problem.

\subsubsection{Link Weight Design}\label{subsubsec:Conditional Link Weight Optimization}

The foundation of our design is a state-of-the-art convergence bound for D-PSGD under the following assumptions:
\begin{enumerate}[(1)]
    \item Each local objective function $F_i(\boldsymbol{x})$ is $l$-Lipschitz smooth, i.e.,    
    $\|\nabla F_i(\bm{x})-\nabla F_i(\bm{x}')\|\leq l\|\bm{x}-\bm{x}'\|,\: \forall i\in V$.
\item There exist constants $M_1, \widehat{\sigma}$ such that    
    $\hspace{-0em} {1\over m}\sum_{i\in V}\E[\|g(\bm{x}_i;\xi_i)-\nabla F_i(\bm{x}_i) \|^2] \leq \widehat{\sigma}^2 + {M_1\over m}\sum_{i\in V}\|\nabla F(\bm{x}_i)\|^2$, 
    $\forall \bm{x}_1,\ldots,\bm{x}_m \in \mathbb{R}^d$.    
\item There exist constants $M_2, \widehat{\zeta}$ such that 
    ${1\over m}\sum_{i\in V}\|\nabla F_i(\bm{x})\|^2\leq \widehat{\zeta}^2 + M_2\|\nabla F(\bm{x}) \|^2, \forall \bm{x} \in \mathbb{R}^d$.    
\end{enumerate}
Let $\bm{J}:={1\over m}\bm{1} \bm{1}^\top$ denote an ideal $m\times m$ mixing matrix with all entries being ${1\over m}$.

\begin{theorem}\cite[Theorem~2]{Koloskova20ICML} \label{thm:new convergence bound}
Under assumptions (1)--(3), if the mixing matrix $\bm{W}$, which is symmetric with each row/column summing to one, satisfies that $\rho:=\|\bm{W}-\bm{J}\|<1$, then D-PSGD can achieve $ \frac{1}{K} \sum_{k=1}^K \mathbbm{E}[\|\nabla F(\boldsymbol{\overline{x}}^k)\|^2]\leq \epsilon$ for any given $\epsilon>0$ ($\overline{\bm{x}}^{(k)}:={1\over m}\sum_{i=1}^m \bm{x}^{(k)}_i$) when the number of iterations reaches
\begin{align}
K(\rho)&:= l(F(\overline{\bm{x}}^{(1)})-F_{\inf}) \nonumber\\
&\hspace{-2.75em}   \cdot \hspace{-.15em}O\hspace{-.25em}\left(\hspace{-.25em}{\widehat{\sigma}^2\over m\epsilon^2} \hspace{-.15em}+\hspace{-.15em}{\widehat{\zeta}\sqrt{M_1+1}+\widehat{\sigma}\sqrt{ 1-\rho^2}\over (1-\rho^2) \epsilon^{3/2}} \hspace{-.15em}+\hspace{-.15em} {\sqrt{(M_2+1)(M_1+1)}\over (1-\rho^2)\epsilon} \hspace{-.05em}\right), \label{eq:new bound on K}
\end{align}
where $\overline{\bm{x}}^{(1)}$ is the initial parameter vector, and $F_{\inf}$ is a lower bound on $F(\cdot)$. 
\end{theorem}

\emph{Remark:} 
The original version of \cite[Theorem~2]{Koloskova20ICML} covers more general cases where the mixing matrices can be random and time-varying. However, for the tractability of design, we will focus on the case of a single deterministic mixing matrix, in which case the bound in \cite[Theorem~2]{Koloskova20ICML} is reduced to \eqref{eq:new bound on K} as shown in \cite{Huang24MobiHoc}. 
While there exist other convergence bounds for D-PSGD such as \cite{MATCHA19,vogels2022beyond,neglia2020decentralized,le2023refined}, we choose Theorem~\ref{thm:new convergence bound} as the theoretical foundation of our design due to the generality of its assumptions as explained in \cite{Koloskova20ICML}. 


Based on Theorem~\ref{thm:new convergence bound}, the parameter $\rho$ captures the impact of the mixing matrix on the convergence rate: the smaller $\rho$, the smaller $K(\rho)$. Given the activated links $E_a$, the optimal weights can then be computed by minimizing $\rho$ subject to having zero-weight for nonactivated links as follows:
\begin{subequations}\label{eq:min rho wo cost}
\begin{align}
& \min_{\bm{\alpha}}\:  \rho \label{wo cost:obj}\\
\mbox{s.t. }
& - \rho\bm{I} \preceq \bm{I} - \bm{B} \diag(\bm{\alpha}) \bm{B}^\top - \bm{J} \preceq  \rho\bm{I}, \label{wo cost:matrix}\\
& \alpha_{ij} = 0,~~\forall (i,j)\not\in E_a, \label{wo cost:activation}
\end{align}
\end{subequations}
which is a semi-definite programming (SDP) problem that can be solved in polynomial time. 

\subsubsection{Link Activation Design}\label{subsec:Link Activation Optimization}

The hardest part of mixing matrix design is the design of which links to activate. The difficulty of this problem arises from the fact that the mixing matrix affects both the time per iteration and the number of iterations required to reach convergence. To minimize the total training time, we should ideally solve 
\begin{align}\label{eq:overall obj}
\min_{\bm{W}} \tau(\bm{W}) \cdot K(\rho(\bm{W})),
\end{align}
where $\tau(\bm{W})$ denotes the time per iteration according to \eqref{eq:min-time simple} (with \eqref{simple min-time:time} replaced by \eqref{eq:simple min-time, category}) for the demands triggered by the mixing matrix $\bm{W}$, and $K(\rho(\bm{W}))$ for $\rho(\bm{W}):= \|\bm{W}-\bm{J}\|$ denotes the number of iterations till convergence according to Theorem~\ref{thm:new convergence bound}. 
This optimization not only has a large solution space but also has a non-closed-form objective function.  \looseness=-1

\emph{Sparse Convex Optimization.} 
Our basic idea is to build the set of activated links gradually by converting \eqref{eq:overall obj} into a sparse convex optimization problem. Intuitively, we can bound the per-iteration time $\tau(\bm{W})$ by bounding the number of activated links. Thus, if we can bound the convergence parameter $\rho(\bm{W})$ and thus the number of iterations $K(\rho(\bm{W}))$ while keeping the number of activated links bounded, then we will be able to bound the total training time according to \eqref{eq:overall obj}. 

To this end, we will leverage the Frank-Wolfe method~\cite{Jaggi13ICML}, which is an iterative way of minimizing a convex objective function through gradient-based linearization. This method is particularly useful when the solution space is the convex hull of a large set of basic solutions called ``atoms'', in which case the solution after $T$ iterations will be ``sparse'' in the sense that it is the convex combination of at most $T$ distinct atoms. The key in applying this method is to construct a suitable set of atoms such that their convex hull contains a good solution and each atom is sufficiently sparse. 

In our case, the atoms should be valid mixing matrices that each activates a small number of links. A natural choice with this property is the set of \emph{swapping matrices $\mathcal{S}:=\{\bm{S}^{(i,j)}: (i,j)\in E\}$}, where each $\bm{S}^{(i,j)}$ is an $m\times m$ matrix that equals the identity matrix except that $S^{(i,j)}_{ii}=S^{(i,j)}_{jj}=0$ and $S^{(i,j)}_{ij} = S^{(i,j)}_{ji} = 1$. Using $\bm{S}^{(i,j)}$ as the mixing matrix has the effect of swapping the parameter vectors at $i$ and $j$ by only activating link $(i,j)$. Meanwhile, we can show that the swapping matrices together with the identity matrix can express all the mixing matrices through linear combinations. 

\begin{lemma}\label{lem:linear combination of swapping matrices}
Any mixing matrix $\bm{W}$ can be written as
\begin{align}
\bm{W} = \left(1-\sum_{(i,j)\in E}\alpha_{ij}\right)\bm{I} + \sum_{(i,j)\in E}\alpha_{ij}\bm{S}^{(i,j)},
\end{align}
where $\alpha_{ij}=W_{ij}$ $\forall (i,j)\in E$. 
\end{lemma}

Lemma~\ref{lem:linear combination of swapping matrices} suggests that we can design the mixing matrix by solving the following constrained convex optimization:
\begin{align}\label{eq:Frank-Wolfe}
\min_{\bm{W}\in \conv(\mathcal{S}^+)} \|\bm{W}-\bm{J}\| =: \rho(\bm{W}),
\end{align}
where $\mathcal{S}^+:= \mathcal{S}\cup \{\bm{I}\}$ and $\conv(\cdot)$ denotes the convex hull. 

\begin{algorithm}[tb]
\small
\SetKwInOut{Input}{input}\SetKwInOut{Output}{output}
\Input{Objective function $\rho(\bm{W})$, solution space $\conv(\mathcal{S}^+)$, \#iterations $T$. }
\Output{Designed mixing matrix $\bm{W}^{(T)}$.}
initialize $\bm{W}^{(0)}\leftarrow \bm{I}$ (the identity matrix)\;
\For{$k=0,\ldots,T-1$}
{$\bm{S}^{(k+1)}\leftarrow \argmin_{\bm{W}\in \conv(\mathcal{S}^+)}<\bm{W}, \nabla\rho(\bm{W}^{(k)})>$\; \label{FMMD:select S}
$\bm{W}^{(k+1)}\leftarrow {k\over k+2}\bm{W}^{(k)}+{2\over k+2}\bm{S}^{(k+1)}$\; \label{FMMD:update W}
}
\caption{Frank-Wolfe Mixing Matrix Design (FMMD)}
\vspace{-.0em}
\label{Alg:FMMD}
\end{algorithm}
\normalsize

\emph{Frank-Wolfe-type Algorithm.} 
Applying the Frank-Wolfe method~\cite{Jaggi13ICML} to \eqref{eq:Frank-Wolfe} yields a mixing matrix design algorithm referred to as \emph{Frank-Wolfe Mixing Matrix Design (FMMD)}, as shown in Alg.~\ref{Alg:FMMD}. It iteratively selects the atom with the minimum inner product with the gradient of the objective function (line~\ref{FMMD:select S}) and incorporates the selected atom into the solution through a convex combination (line~\ref{FMMD:update W}). For the problem in \eqref{eq:Frank-Wolfe}, the gradient is given by
\begin{align}
\nabla\rho(\bm{W}^{(k)}) = \umax(\bm{W}^{(k)}-\bm{J}) \vmax^\top(\bm{W}^{(k)}-\bm{J}),
\end{align}
where $\umax(\bm{A})$ and $\vmax(\bm{A})$ denote the left/right singular vector of a matrix $\bm{A}$ corresponding to its largest singular value. Since the objective of line~\ref{FMMD:select S} is linear in $\bm{W}$, the $\argmin$ must be achieved at an atom in $\mathcal{S}^+$. Thus, line~\ref{FMMD:select S} can be implemented by selecting $\bm{S}^{(k+1)}$ as 
\begin{align}
\hspace{-.5em}\argmin_{\bm{S}\in \mathcal{S}^+} \hspace{-.25em}<\hspace{-.25em}\bm{S}, \umax(\bm{W}^{(k)}\hspace{-.15em}-\hspace{-.15em}\bm{J}) \vmax^\top(\bm{W}^{(k)}\hspace{-.15em}-\hspace{-.15em}\bm{J})\hspace{-.25em}>\hspace{-.0em}, \label{eq:selecting S}
\end{align}
which can be computed efficiently as $|\mathcal{S}^+|=O(m^2)$. 

\emph{Performance Guarantee.}
The Frank-Wolfe-type updates of FMMD has the property that the solution $\bm{W}^{(T)}$ after $T$ iterations is the convex combination of up to $T$ atoms in $\mathcal{S}^+$. Since each atom in $\mathcal{S}^+$ activates at most one link, this allows us to bound the number of activated links in $\bm{W}^{(T)}$ and hence the per-iteration time $\tau(\bm{W}^{(T)})$. Together with the bounded optimality gap of the Frank-Wolfe method in approximating the optimal objective value, this leads to the following performance guarantee for FMMD. 

\begin{theorem}\label{thm:FMMD}
Suppose that $l_{i,j}=0, \forall (i,j)\in E$ and $\kappa_i\equiv \kappa, \forall i\in V$. Let $\Cmin := \min_{F\in \mathcal{F}} C_F$. If $m>3$ and $T>{16\over 3}m-2$, then the total training time under the mixing matrix designed by FMMD after $T$ iterations is bounded by
\begin{align}
\hspace{-.5em}\tau(\bm{W}^{(T)}) K(\rho(\bm{W}^{(T)})) \leq {\kappa T\over \Cmin}  K\left({m-3\over m}+{16\over T+2}\right), \label{eq:FMMD bound}
\end{align}
where $K(\cdot)$ is defined as in \eqref{eq:new bound on K}. 
Moreover, when $m\gg 1$ and $T = \lceil {32\over 5}m-2\rceil$, the bound \eqref{eq:FMMD bound} is explicitly characterized as
\begin{align}
&O\bigg({\kappa  l(F(\overline{\bm{x}}^{(1)})-F_{\inf})\over \Cmin} \cdot \nonumber\\
&\hspace{3em} \Big({\widehat{\zeta}\sqrt{M_1+1}\over \epsilon^{3/2}}+{\sqrt{(M_2+1)(M_1+1)}\over \epsilon} \Big) m^2 \bigg). \label{eq:FMMD bound - large m}
\end{align}
\end{theorem}

\emph{Remark:} The bound \eqref{eq:FMMD bound - large m} implies that the total training time will grow quadratically with the number of agents $m$. 

\emph{Further Improvements.}
While the direct application of Frank-Wolfe method allows us to provide a theoretical performance guarantee (Theorem~\ref{thm:FMMD}), the performance of FMMD can be further improved due to the following observations:
\begin{enumerate}
\item Although any mixing matrix can be represented as a linear combination of the atoms in $\mathcal{S}^+$ (Lemma~\ref{lem:linear combination of swapping matrices}), the Frank-Wolfe method only optimizes among the convex combinations of $\mathcal{S}^+$, thus leaving room for improvement by further optimizing the link weights. 
\item The Frank-Wolfe method controls \#activated links, but the actual communication time $\tau$ can differ under the same \#activated links, suggesting the need to consider the impact on $\tau$ when selecting which link to activate. 
\end{enumerate}

Accordingly, we introduce the following improvements to Alg.~\ref{Alg:FMMD}. \emph{First}, we can extract the set of links $E_a(\bm{W}^{(T)}):=\{(i,j)\in E: W^{(T)}_{ij}\neq 0\}$ activated by the mixing matrix $\bm{W}^{(T)}$ designed by the Frank-Wolfe method, and then use \eqref{eq:min rho wo cost} to further optimize the non-zero weights therein. \emph{Moreover}, we can prioritize atoms with small impact on the per-iteration time in line~\ref{FMMD:select S}, by modifying the search space of \eqref{eq:selecting S} from the entire $\mathcal{S}^+$ to 
the subset of unselected atoms that once selected will cause the minimum per-iteration time (i.e., $\min \tau(\bm{W}^{(k+1)})$). However, evaluating $\tau(\bm{W})$ requires solving a MILP \eqref{eq:min-time simple}, which is computationally expensive. Instead, we use an easily computable upper bound on $\tau(\bm{W})$, given by the completion time when routing each activated flow to the default path given by the underlay: 
\begin{align}\label{eq:tau no routing}
\overline{\tau}(\bm{W}):= \max_{F\in \mathcal{F}}{\kappa\over C_F}|E_a(\bm{W})\cap F|.
\end{align}
Based on this bound, we modify the search space of \eqref{eq:selecting S} to
\begin{align}\label{eq:limited space for selecting S}
&\hspace{-.75em}\bigg\{\bm{S}\in \mathcal{S}^+\setminus \mathcal{S}(\bm{W}^{(k)}): \overline{\tau}\Big({k\over k+2}\bm{W}^{(k)}+{2\over k+2}\bm{S}\Big) \leq \nonumber\\
&\hspace{.5em}\overline{\tau}\Big({k\over k+2}\bm{W}^{(k)}+{2\over k+2}\bm{S}'\Big), \forall \bm{S}'\in \mathcal{S}^+\setminus \mathcal{S}(\bm{W}^{(k)})\bigg\},
\end{align}
where $\mathcal{S}(\bm{W}^{(k)})$ denotes the set of (already selected) atoms used to construct $\bm{W}^{(k)}$. 

We refer to FMMD with the first improvement as \emph{FMMD with Weight optimization (FMMD-W)}, FMMD with the second improvement as \emph{FMMD with Priority (FMMD-P)}, and FMMD with both improvements as \emph{FMMD with Weight optimization and Priority (FMMD-WP)}.

\section{Performance Evaluation}\label{sec:Performance Evaluation}

We evaluate the proposed algorithms against benchmarks through realistic data-driven simulations. \looseness=0

\subsection{Simulation Setup}

\subsubsection{Dataset and ML Model}

We train a ResNet-50 model with 23,616,394 parameters and a model size of 94.47 MB (under precision FP32) for image classification on the CIFAR-10 dataset, which contains 60,000 color images divided into 10 classes. We use 50,000 images for training and the remaining 10,000 images for testing. The dataset undergoes standard preprocessing, including normalization and one-hot encoding of the labels.
%
We use a mini‐batch size of 64 and an adaptive learning rate that is 0.1 for the first 30 epochs, 
0.05 for the next 30 epochs, 
and 0.01 thereafter. These settings are sufficient for D-PSGD to achieve convergence under all the evaluated designs. 
To check the generalizability of our results, we also train a 4-layer CNN model with 0.2 learning rate, which has 582,026 parameters and a model size of 2.33 MB, for digit recognition on the MNIST dataset, which contains 60,000 training images and 10,000 testing images. This model setup is based on the approach from \cite{McMahan17AISTATS}.
\if\thisismainpaper1
Due to space limitation, we defer the results on MNIST to \cite{Sun25arXiv} as the observations are similar. 
\else
We defer the results on MNIST to Appendix~\ref{appendix:Additional Evaluations} as the observations are similar. 
\fi

\subsubsection{Network Topology}

We simulate the underlay based on the topology and link attributes of Roofnet~\cite{Roofnet}, which is a WiFi-based wireless mesh network with 38 nodes, 219 links, and a data rate of 1 Mbps. We select $10$ lowest-degree nodes as learning agents (i.e., overlay nodes) and uniformly distribute the training data among them. We assume shortest-path routing (based on hop count) in the underlay. The network topology is illustrated in Fig.~\ref{fig:underlay_topo}.

\begin{figure}[t!]
\vspace{-0em}
\centerline{\mbox{\includegraphics[width=.6\linewidth]{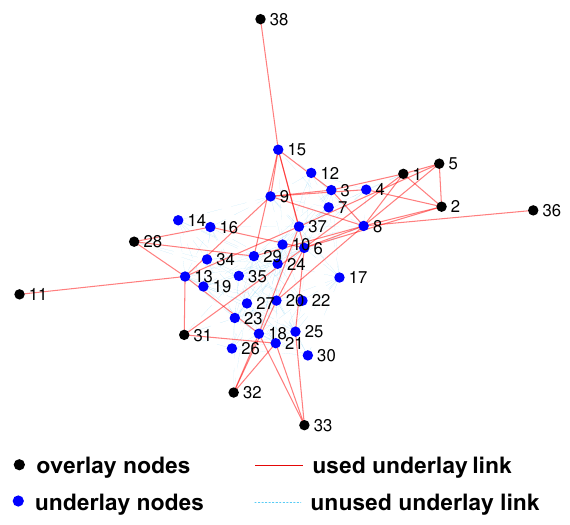}}}
\vspace{-1em}
\caption{\small Network topology and distribution of learning agents. }
\label{fig:underlay_topo}
\vspace{-.5em}
\end{figure}


\subsubsection{Benchmarks}

We compare the proposed algorithm Alg.~\ref{Alg:FMMD} (`FMMD') against the following benchmarks: \begin{itemize}
    \item the baseline of activating all the links (`Clique');
    \item the ring topology (`Ring') commonly used in practice; 
    \item the minimum spanning tree computed by Prim's algorithm (`Prim'),  proposed by \cite{marfoq2020throughput} for DFL over high-bandwidth networks; 
    \item the heuristic based on successive convex approximation (`SCA'), proposed in \cite{Huang24MobiHoc} for DFL over bandwidth-limited networks, which represents the state of the art.    
\end{itemize}

\subsection{Simulation Results}\label{subsec:Simulation Results}

\subsubsection{Comparison of Design Choices}

\begin{figure}[t!]
\begin{minipage}{.495\linewidth}
\centerline{
\includegraphics[width=1\linewidth]{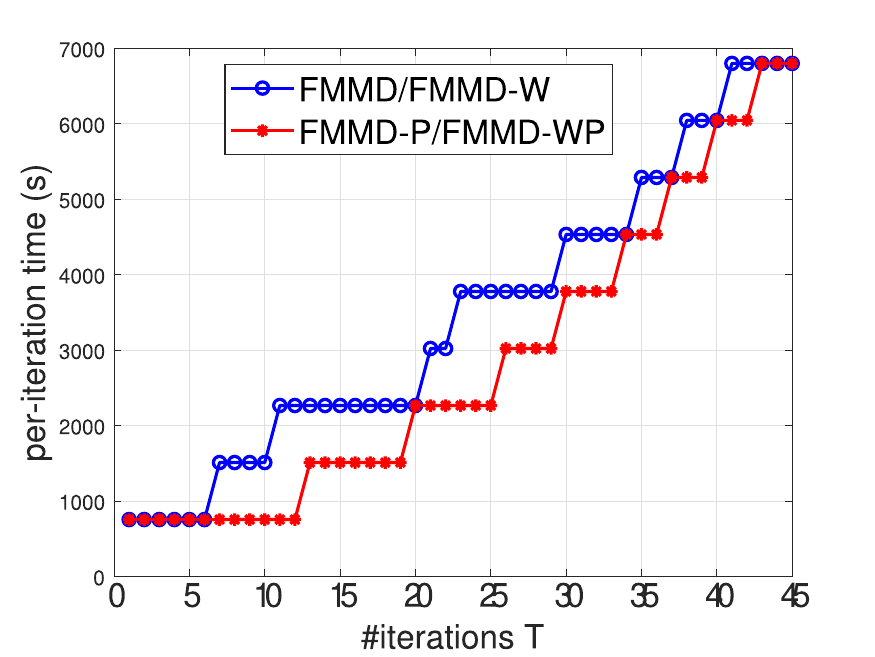}}
\vspace{-.1em}
\end{minipage}
\begin{minipage}{.495\linewidth}
\centerline{
\includegraphics[width=1\linewidth]{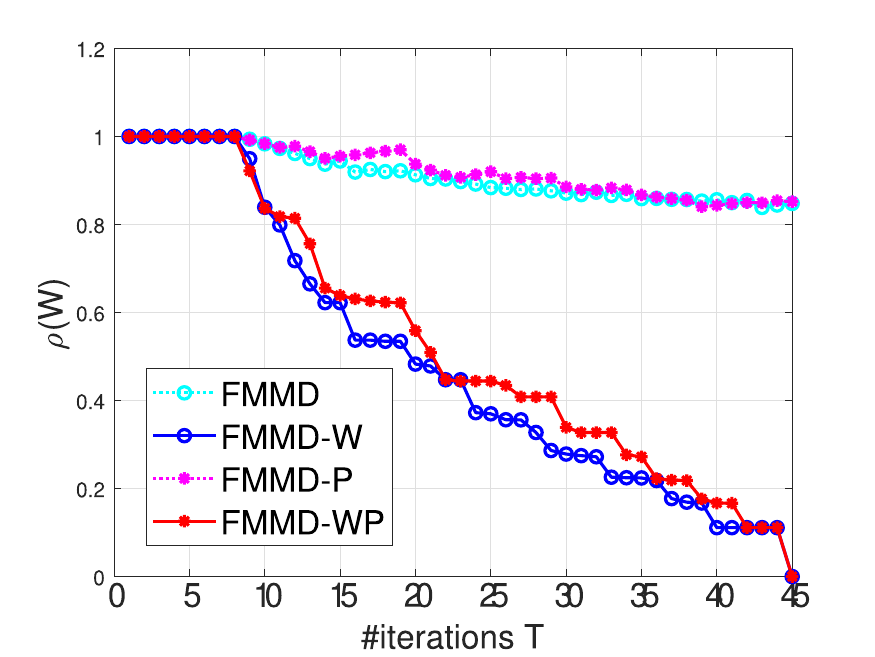}}
\vspace{-.1em}
\end{minipage}
\vspace{-.5em}
\caption{Comparison of FMMD and its variations (note that FMMD and FMMD-W activate the same links and thus have the same per-iteration time, so are FMMD-P and FMMD-WP).  
} \label{fig:FMMD_T}
\vspace{-.5em}
\end{figure}

Fig.~\ref{fig:FMMD_T} compares the vanilla version of FMMD as in Alg.~\ref{Alg:FMMD} (`FMMD') with its variations that incorporate link weight optimization (`FMMD-W'), atom priority (`FMMD-P'), or both (`FMMD-WP'). As these algorithms all aim at designing a communication-efficient mixing matrix $\bm{W}$ with the minimum $\rho(\bm{W})$, we compare them in terms of the per-iteration time $\tau(\bm{W})$ and the convergence rate represented by $\rho(\bm{W})$ (the smaller $\rho$, the faster the convergence). The results show that: (i) all these algorithms achieve smaller $\rho$-values and larger per-iteration times as the number of iterations increases, reflecting the tradeoff between per-iteration communication cost and convergence rate; (ii) further optimizing the weights of activated links by \eqref{eq:min rho wo cost} is necessary for achieving a small $\rho$-value; (iii) while limiting the search space as in \eqref{eq:limited space for selecting S} may cause a slight degradation in the $\rho$-value, it can reduce the per-iteration time by as much as 3$\times$, thus achieving a better tradeoff than using the full search space.  
Since FMMD-WP is the best-performing version, below we will use it to represent the proposed solution, simply referred to as `FMMD'.

\subsubsection{Comparison with Benchmarks}

\begin{figure}[t!]
\begin{minipage}{.495\linewidth}
\centerline{
\includegraphics[width=1\linewidth]{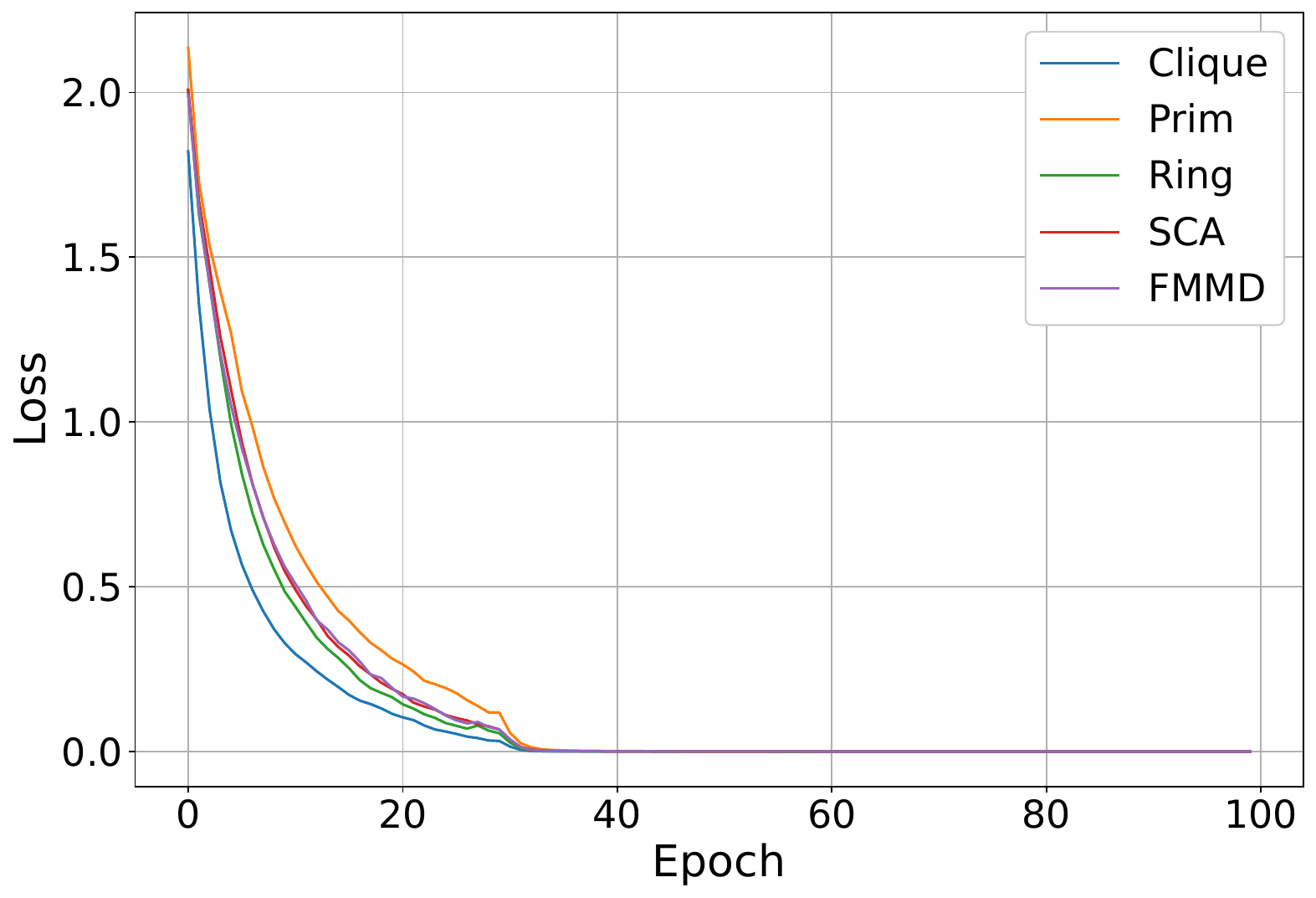}}
\vspace{-.1em}
\end{minipage}
\begin{minipage}{.495\linewidth}
\centerline{
\includegraphics[width=1\linewidth]{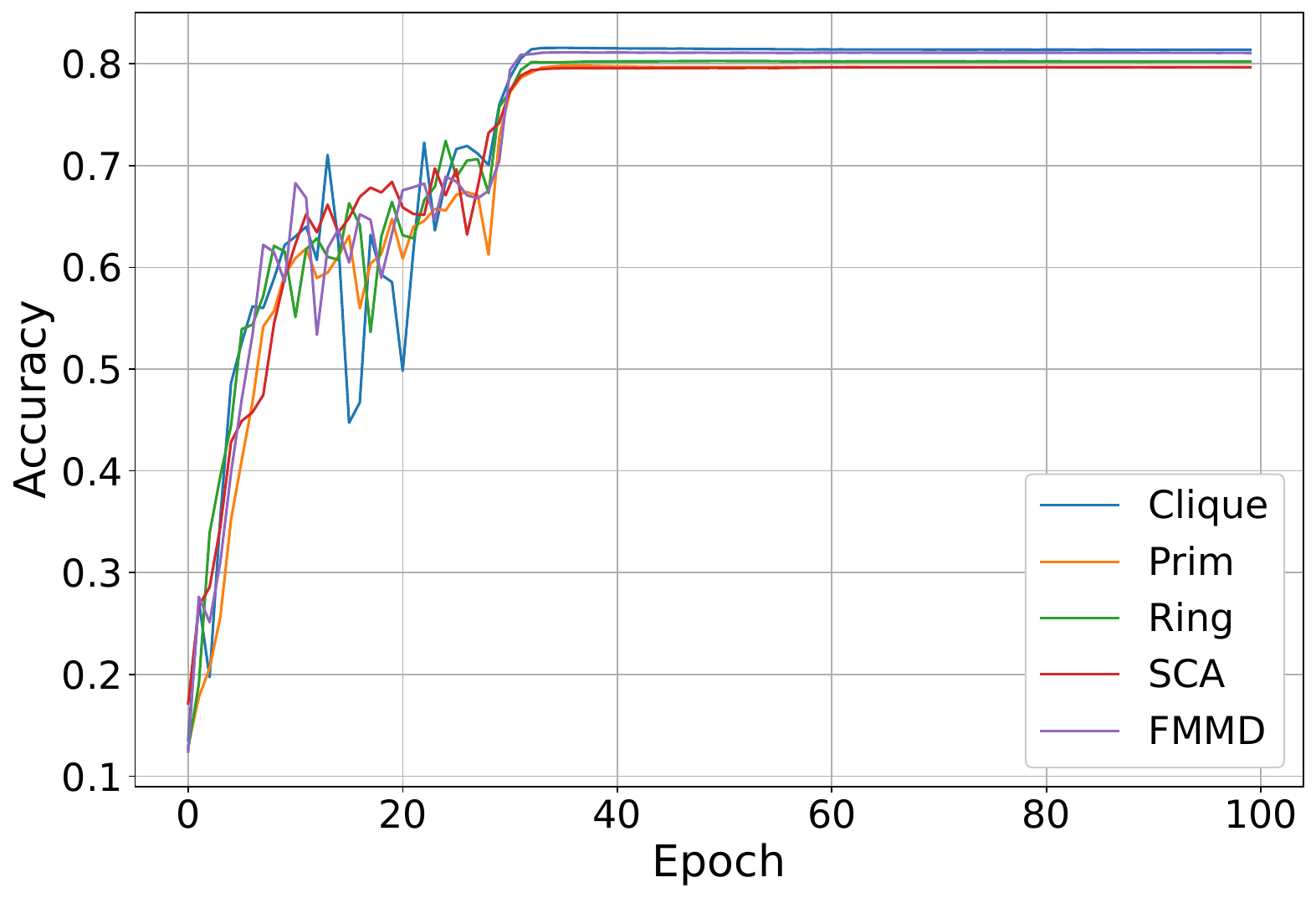}}
\vspace{-.1em}
\end{minipage}
\begin{minipage}{.495\linewidth}
\centerline{
\includegraphics[width=1\linewidth]{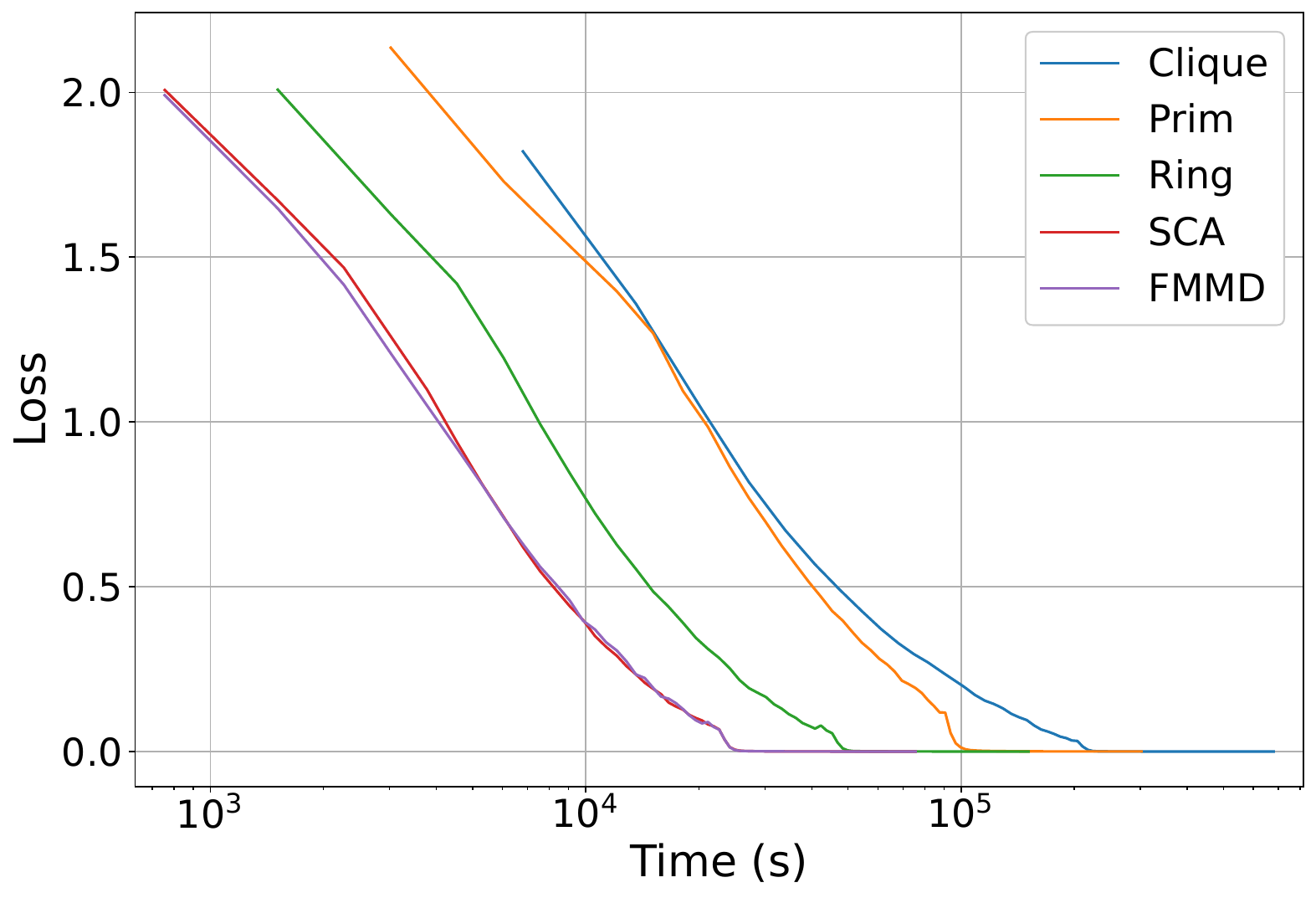}}
\vspace{-.1em}
\end{minipage}
\begin{minipage}{.495\linewidth}
\centerline{
\includegraphics[width=1\linewidth]{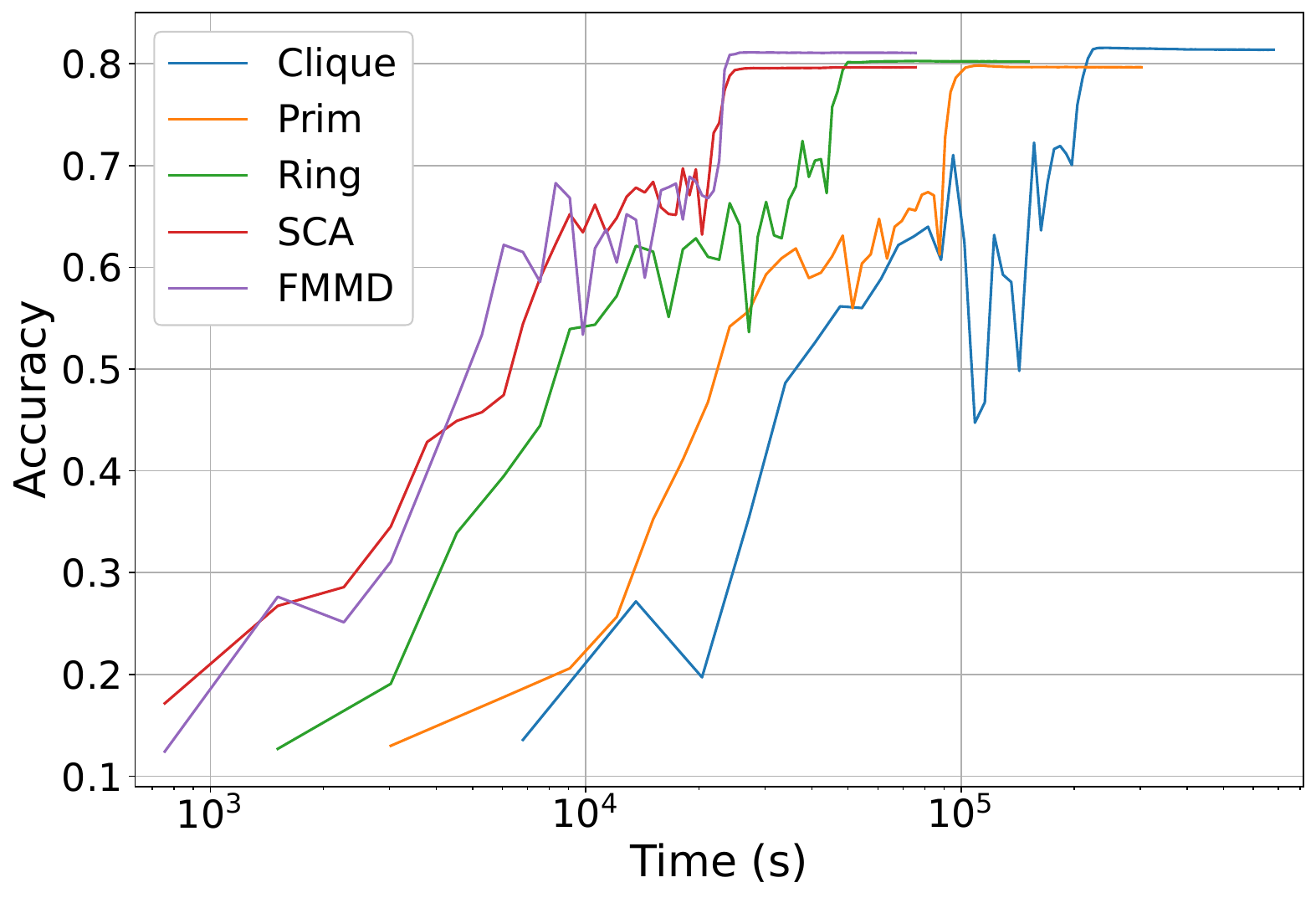}}
\vspace{-.1em}
\end{minipage}
\begin{minipage}{.495\linewidth}
\centerline{
\includegraphics[width=1\linewidth]{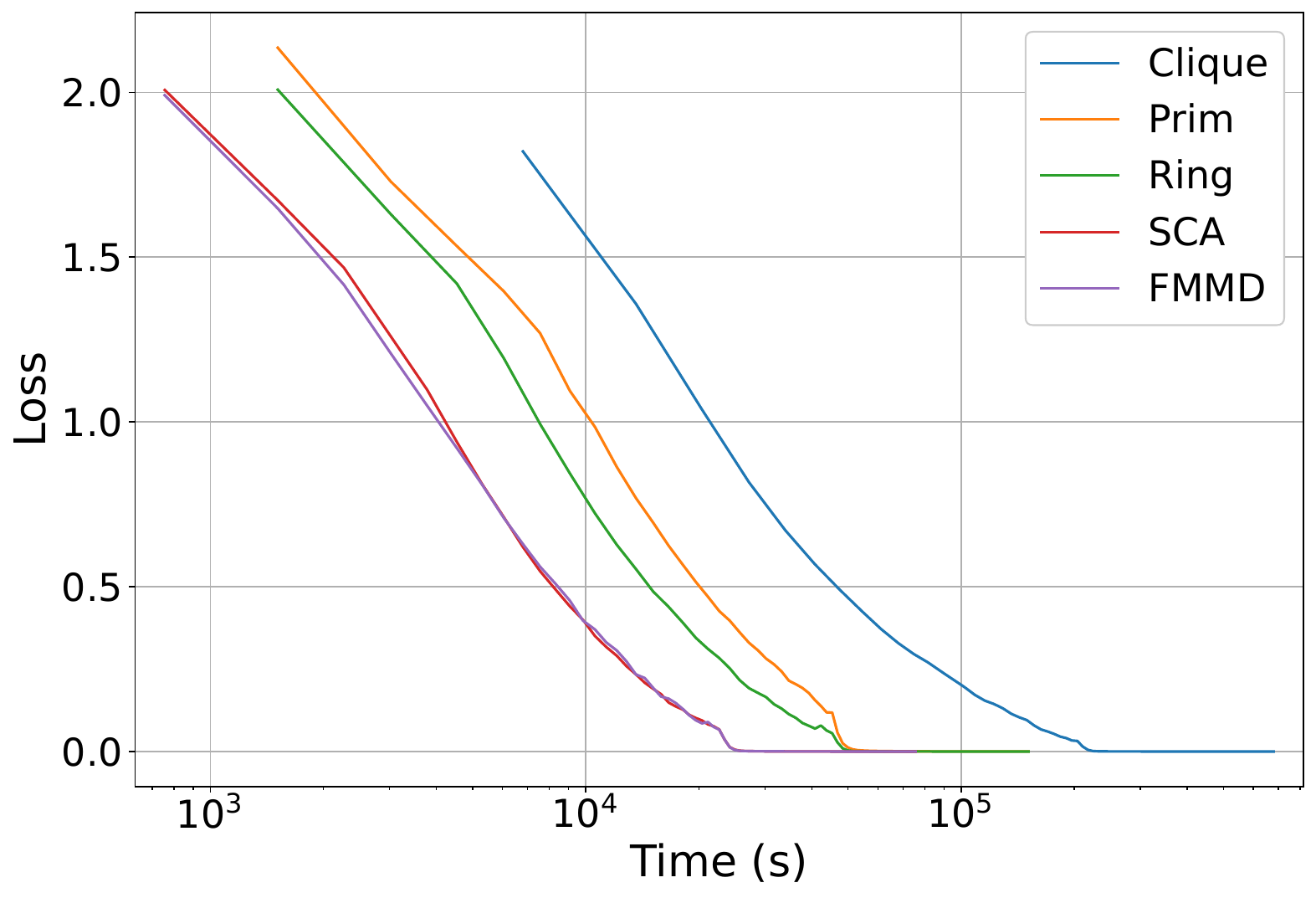}}
\vspace{-.1em}
\end{minipage}
\vspace{-.1em}
\begin{minipage}{.495\linewidth}
\centerline{
\includegraphics[width=1\linewidth]{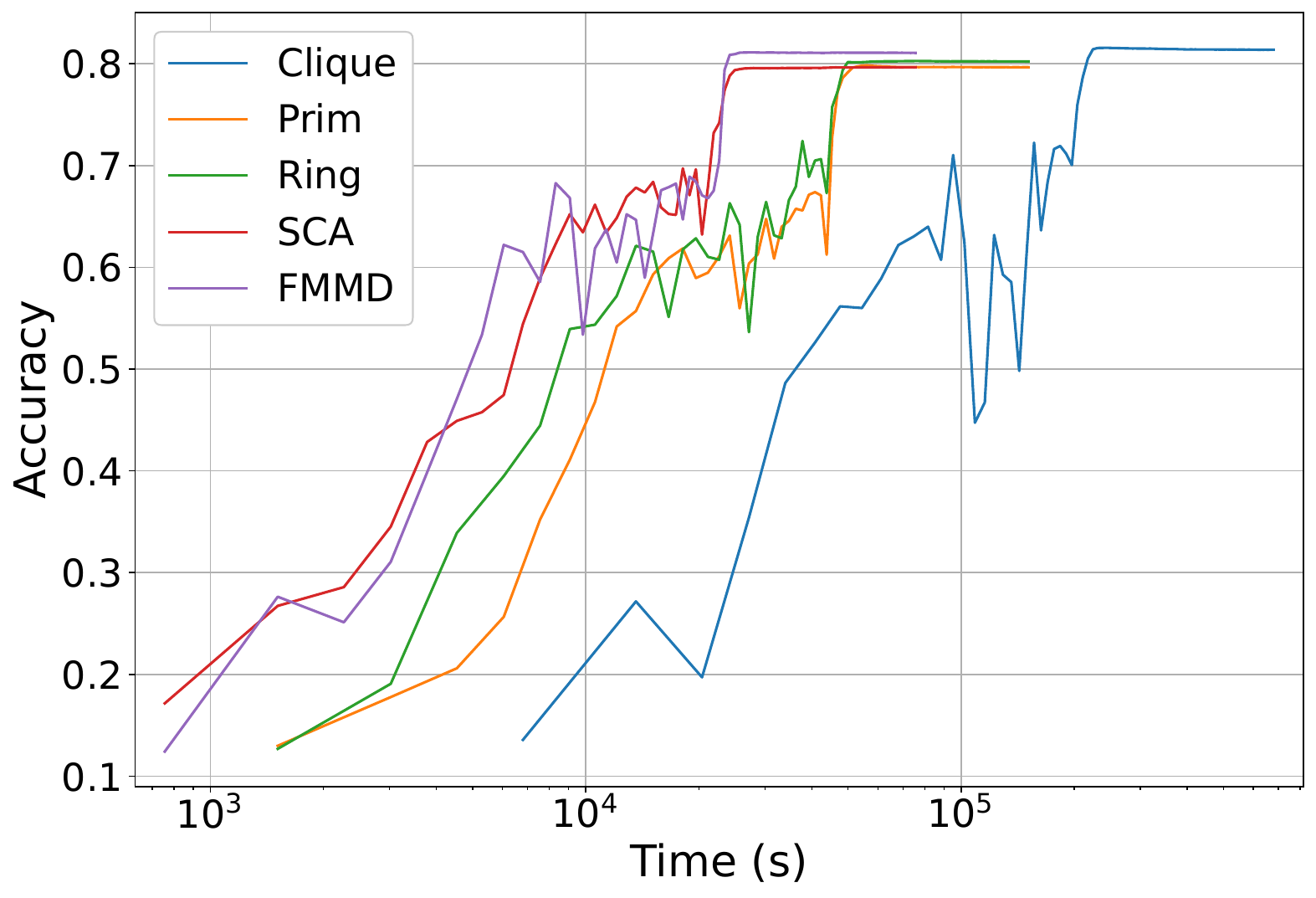}}
\vspace{-.1em}
\end{minipage}
\vspace{-1em}
\caption{Comparison with benchmarks: first row - loss/accuracy over \#epochs, second row - loss/accuracy over $\overline{\tau}$, third row - loss/accuracy over $\tau$.  
} \label{fig:roofnet_no_error}
\vspace{-.0em}
\end{figure}

Fig.~\ref{fig:roofnet_no_error} compares the training performance under various designs in terms of training loss and testing accuracy, where we set \#iterations $T=12$ for `FMMD'. For a fair comparison, we have used \eqref{eq:min rho wo cost} to optimize the link weights under each design. We have also used the same communication scheme for all the designs, either directly communicating over the default routing paths as in \eqref{eq:tau no routing} (second row) or using the optimal overlay routing as in \eqref{eq:min-time simple}, with \eqref{simple min-time:time} replaced by \eqref{eq:simple min-time, category} (third row). The results show that: (i) using sparse topologies rather than the clique can effectively reduce the training time without compromising the performance at convergence, (ii) different designs differ slightly in the convergence rate in terms of epochs, but significantly in the convergence rate in terms of the actual wall-clock time, (iii) the proposed design by `FMMD' matches the best-performing state of the art (`SCA'~\cite{Huang24MobiHoc}), while significantly outperforming the other benchmarks (reducing the training time by {$50\%$} over `Ring' and `Prim' and {$89\%$} over `Clique'), and (iv) overlay routing  can significantly reduce the learning time in some cases (e.g., by $50\%$ for `Prim')\footnote{We note that the exact amount of reduction varies case by case, depending on the underlay topology, routing, and link capacities, and the communication demands in the overlay, but overlay routing always performs no worse than directly using the underlay routing paths.}. We note that among these designs, only `FMMD' and `SCA' consider the internal state of the underlay, which highlights \emph{the importance of network awareness for overlay-based DFL}. 

Meanwhile, there is a computation cost for network awareness as shown in Table~\ref{tab: run_time_algs}, where both `FMMD' and `SCA' are slower than the simplistic designs (`Prim', `Ring', and `Clique') under the routing by \eqref{eq:min-time simple}. Nevertheless, `FMMD' is notably faster than `SCA' thanks to its linearization of the objective function. Moreover, the newly proposed MILP formulation \eqref{eq:min-time simple} is much faster to solve than the previously proposed MICP formulation \eqref{eq:min-time} in \cite{Huang24MobiHoc}. The improved computational efficiency together with the theoretical performance guarantee makes `FMMD' a more desirable solution than `SCA'. \looseness=-1

\begin{table}[]
\small
    \centering
    \begin{tabular}{c|c|c|c|c|c}
         &  SCA & FMMD & Prim & Ring & Clique \\
         \hline
routing by \eqref{eq:min-time} & 2.94  & 2.52 & 3.06 & 4.82 & $>1000$\footnotemark\\       
routing by \eqref{eq:min-time simple} & 1.58  & 0.75 & 0.307 & 0.342 & 0.46
    \end{tabular}
    \vspace{.5em}
    \caption{Running times (s) (including link activation design, link weight design, and overlay routing). 
    }
    \label{tab: run_time_algs}
    \vspace{-1em}
\end{table} 
\footnotetext{When using Gurobi to solve \eqref{eq:min-time}, the solver does not converge in 1,000 s.}

\section{Conclusion}\label{sec:Conclusion}

This work addressed the problem of communication optimization for DFL on top of a bandwidth-limited edge network. Treating the learning agents as overlay nodes, we formulated a joint optimization of both the communication scheme within the overlay and the mixing matrix that controls the communication demands. We showed that the communication scheme design problem can be formulated as a MILP that can be solved without cooperation from the underlay, and the mixing matrix design problem can be formulated as a sparse convex optimization that can be solved by a Frank-Wolfe-type algorithm with guaranteed performance. Our evaluations based on real topology and data validated the ability of the proposed solution to significantly reduce the training time without sacrificing the quality of the trained model. Our overlay-based approach makes our solution easily deployable without requiring the cooperation of the edge network. 


\bibliographystyle{IEEEtran}
\bibliography{references.bib, survey_backup.bib}

\if\thisismainpaper0

\appendix

\subsection{Supporting Proofs}\label{appendix:Proofs}

\begin{proof}[Proof of Lemma~\ref{lem:equal bandwidth allocation}]
The rate of each multicast flow $h\in H$ is determined by the minimum rate of the unicast flows constituting it. Consider the bottleneck underlay link $\ue^* := \argmin_{\ue\in \uE} C_{\ue}/t_{\ue}$. Since there are $t_{\ue^*}$ unicast flows sharing a total bandwidth of $C_{\ue^*}$ at $\ue^*$, the slowest of these flows cannot have a rate higher than $C_{\ue^*}/t_{\ue^*}$. Thus, the multicast flow containing this slowest unicast flow cannot have a rate higher than $C_{\ue^*}/t_{\ue^*}$, which means that the completion time for all the multicast flows is no smaller than \eqref{eq:tau - special case, per-link}. 
Meanwhile, if the bandwidth of every link is shared equally among the activated unicast flows traversing it, then each unicast flow will receive a bandwidth of no less than $C_{\ue^*}/t_{\ue^*}$ at every hop, and thus can achieve a rate of at least $C_{\ue^*}/t_{\ue^*}$. Hence, each multicast flow $h\in H$ can achieve a rate of at least $C_{\ue^*}/t_{\ue^*}$, yielding a completion time of no more than \eqref{eq:tau - special case, per-link}. 
\end{proof}

\begin{proof}[Proof of Lemma~\ref{lem:equal bandwidth allocation - category}]
According to Lemma~\ref{lem:equal bandwidth allocation}, it suffices to prove that $\min_{F\in \mathcal{F}} C_F/t_F = \min_{\ue\in \uE} C_{\ue}/t_{\ue}$. To this end, we first note that by Definition~\ref{def: category}, all the underlay links in the same category must be traversed by the same set of activated unicast flows, i.e., $t_{\ue} = t_F$ $\forall \ue\in \Gamma_F$. By the definition of the category capacity $C_F$, we have 
\begin{align}
\min_{\ue\in \Gamma_F} {C_{\ue}\over t_{\ue}} = \min_{\ue\in \Gamma_F} {C_{\ue}\over t_F} = {C_F\over t_F}.
\end{align}
Thus, we have
\begin{align}
\min_{\ue\in \uE} {C_{\ue}\over t_{\ue}} = \min_{F\in \mathcal{F}} \min_{\ue\in \Gamma_F} {C_{\ue}\over t_{\ue}} = \min_{F\in \mathcal{F}} {C_F\over t_F}. 
\end{align}
\end{proof}

\begin{proof}[Proof of Lemma~\ref{lem:linear combination of swapping matrices}]
Let $\bm{L}^{(i,j)}$ denote the Laplacian matrix for an $m$-node graph with a single undirected link $(i,j)$. By the definition \eqref{eq:W}, any mixing matrix $\bm{W}$
can be written as
\begin{align}
\bm{W} &= \bm{I}-\sum_{(i,j)\in E}\alpha_{ij}\bm{L}^{(i,j)} \nonumber\\
& = \left(1-\sum_{(i,j)\in E}\alpha_{ij} \right)\bm{I} + \sum_{(i,j)\in E}\alpha_{ij}(\bm{I}-\bm{L}^{(i,j)}).
\end{align}
The proof completes by noting that the swapping matrix $\bm{S}^{(i,j)}= \bm{I}-\bm{L}^{(i,j)}$. 
\end{proof}

\begin{proof}[Proof of Theorem~\ref{thm:FMMD}]
First, as each parameter exchange costs at most $\kappa/\Cmin$ in time, and each iteration of FMMD activates at most one more parameter exchange, the per-iteration time after $T$ iterations is bounded by 
\begin{align}\label{eq:FMMD proof - tau bound}
\tau(\bm{W}^{(T)})\leq {\kappa T\over \Cmin}.
\end{align}

Meanwhile, by \cite[Theorem~1]{Jaggi13ICML}, the optimality gap of $\bm{W}^{(T)}$ is bounded by
\begin{align}\label{eq:FMMD proof - Frank-Wolfe}
\rho(\bm{W}^{(T)})-\rho(\bm{W}^*)\leq {2 C_{\rho}\over T+2},
\end{align}
where $\bm{W}^*$ is the optimal solution to \eqref{eq:Frank-Wolfe}, and $C_{\rho}$ is the curvature constant of $\rho(\bm{W})$ on $\conv(\mathcal{S}^+)$. 

Let $\bm{W}^o$ denote the unconstrained minimum point of $\rho(\bm{W})$. Since $\rho(\bm{W})$ is convex, $\bm{W}^*$ is the Euclidean projection of $\bm{W}^o$ to $\conv(\mathcal{S}^+)$. Equivalently, representing each $\bm{W}\in \conv(\mathcal{S}^+)$ by $\sum_{\bm{S}\in \mathcal{S}^+}\beta_{\bm{S}}\bm{S}$ for $\bm{\beta}\in \Delta_{|\mathcal{S}^+|}$ (the $|\mathcal{S}^+|$-dimensional probability simplex), we need to find the projection of $\bm{\beta}^o$ corresponding to $\bm{W}^o$ to $\Delta_{|\mathcal{S}^+|}$. It is easy to see that the minimum value of $\rho(\bm{W}^o)=0$ is achieved by\looseness=0
\begin{align}
\beta^o_{\bm{S}}=\left\{\begin{array}{ll} 
{3-m\over 2} &\mbox{if } \bm{S}=\bm{I},\\
{1\over m} & \mbox{o.w.}
\end{array}\right.
\end{align}
Its Euclidean projection to $\Delta_{|\mathcal{S}^+|}$ equals
\begin{align}
\beta^*_{\bm{S}}=\left\{\begin{array}{ll} 
0 &\mbox{if } \bm{S}=\bm{I},\\
{2\over m(m-1)} & \mbox{o.w.}
\end{array}\right.
\end{align}
Accordingly, $\bm{W}^*=\sum_{\bm{S}\in\mathcal{S}^+}\beta^*_{\bm{S}}\bm{S}$ satisfies
\begin{align}
W^*_{ij} - J_{ij}=\left\{\begin{array}{ll}
{m-3\over m} & \mbox{if }i=j,\\
{3-m\over m(m-1)} & \mbox{o.w.},
\end{array}\right.
\end{align}
and thus $\rho(\bm{W}^*)=\|\bm{W}^*-\bm{J}\| = (m-3)/m$. 

By \cite{Jaggi13ICML}, $C_{\rho}\leq \diam(\conv(\mathcal{S}^+))^2 L$, where $\diam(\conv(\mathcal{S}^+))$ is the diameter of the solution space, and $L$ is the Lipschitz constant of $\nabla\rho(\bm{W})$. We have
\begin{align}
 \diam(\conv(\mathcal{S}^+)) &= \max_{\bm{S}_1,\bm{S}_2\in \mathcal{S}^+}\|\bm{S}_1-\bm{S}_2\| \nonumber\\
 &\leq 2\max_{\bm{S}\in \mathcal{S}^+}\|\bm{S}\| = 2,
\end{align}
as the spectral norm of any swapping matrix is $1$. The Lipschitz constant is bounded by
\begin{align}
L&\leq \max_{\bm{W}\in \conv(\mathcal{S}^+)}\|\bm{W}-\bm{J}\| \nonumber\\
&\leq \max_{\bm{W}\in \conv(\mathcal{S}^+)}\|\bm{W}\| + \|\bm{J}\|\leq 2,
\end{align}
as $\|\bm{J}\|=1$ and by Jensen's inequality $\|\bm{W}\|\leq \sum_{\bm{S}\in\mathcal{S}^+}\beta_{\bm{S}}\|\bm{S}\|= 1$. 
Thus, $C_{\rho}\leq 8$. 
Plugging the value of $\rho(\bm{W}^*)$ and the bound on $C_{\rho}$ into \eqref{eq:FMMD proof - Frank-Wolfe} yields
\begin{align}\label{eq:FMMD proof - rho bound}
\rho(\bm{W}^{(T)})\leq {m-3\over m}+{16\over T+2}.
\end{align}

Combining \eqref{eq:FMMD proof - tau bound}, \eqref{eq:FMMD proof - rho bound}, and the fact that $K(\cdot)$ in \eqref{eq:new bound on K} is an increasing function leads to the bound in \eqref{eq:FMMD bound}. 

In the case of $m\gg 1$ and $T = \lceil {32\over 5}m-2\rceil$, the bound in \eqref{eq:FMMD proof - rho bound} $\approx 1-{1\over 2m}$, which implies that
\begin{align}
{K\left({m-3\over m}+{16\over T+2}\right) \over l(F(\overline{\bm{x}}^{(1)})-F_{\inf})} &= O\bigg({\widehat{\sigma}^2\over m\epsilon^2}+{\widehat{\zeta}\sqrt{M_1+1}+\widehat{\sigma}\sqrt{1\over m}\over \epsilon^{3/2}/m} \nonumber\\
&\hspace{6em} + {\sqrt{(M_2+1)(M_1+1)}\over \epsilon/m} \bigg) \nonumber\\
&\hspace{-4em}= O\left(\Big({\widehat{\zeta}\sqrt{M_1+1}\over \epsilon^{3/2}}+{\sqrt{(M_2+1)(M_1+1)}\over \epsilon} \Big)m \right). \nonumber
\end{align}
Plugging this as well as $T=O(m)$ into \eqref{eq:FMMD bound} yields \eqref{eq:FMMD bound - large m}.
\end{proof}

\subsection{Additional Evaluations}\label{appendix:Additional Evaluations}

To validate the generalizability of our observations in Section~\ref{subsec:Simulation Results}, we repeat the simulations in Fig.~\ref{fig:FMMD_T}--\ref{fig:roofnet_no_error} and Table~\ref{tab: run_time_algs} for training a 4-layer CNN over the MNIST dataset.
Fig.~\ref{fig:FMMD_T_mnist} shows the comparison between the vanilla version of FMMD and its variations,   Fig.~\ref{fig:roofnet_no_error_mnist} compares the training performance of the best-performing version of FMMD with benchmarks, 
and Table~\ref{tab: run_time_algs_MNIST} compares the algorithm running times. All three results suggest the same observations as before, i.e., FMMD-WP is the best-performing version of the FMMD algorithm, which together with the linearized overlay routing \eqref{eq:min-time simple} can match the best-performing state of the art (`SCA'~\cite{Huang24MobiHoc}) in terms of training performance with significantly better computational efficiency.

\begin{figure}[t!]
\begin{minipage}{.495\linewidth}
\centerline{
\includegraphics[width=1\linewidth]{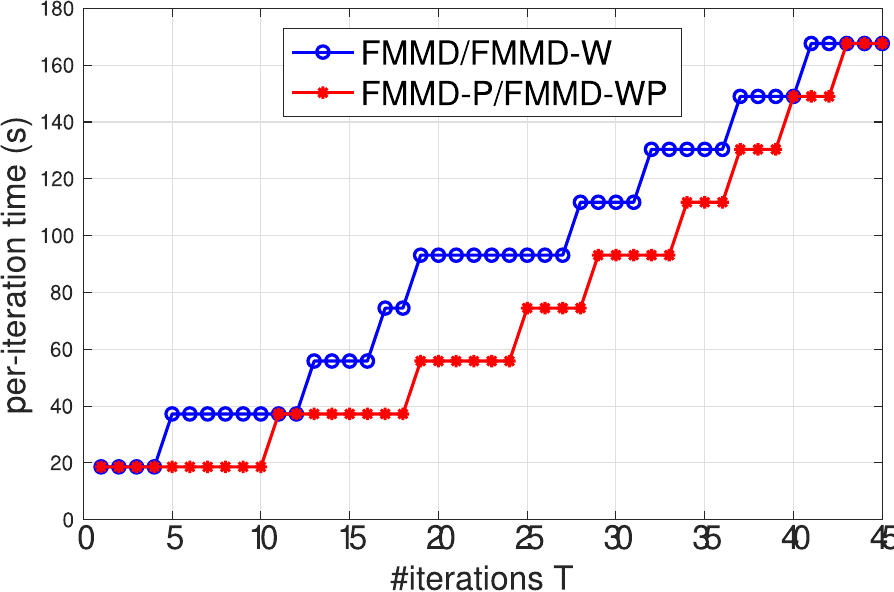}}
\vspace{-.1em}
\end{minipage}
\begin{minipage}{.495\linewidth}
\centerline{
\includegraphics[width=1\linewidth]{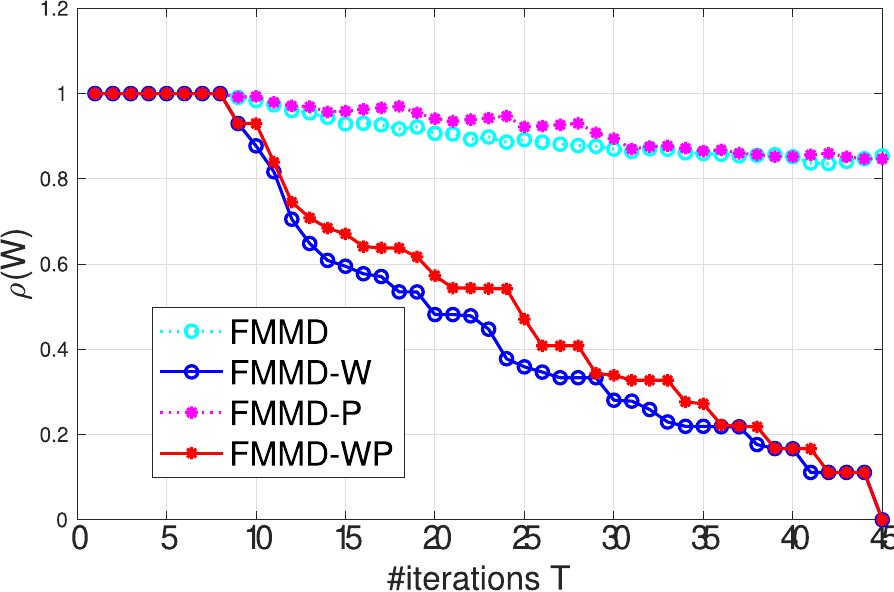}}
\vspace{-.1em}
\end{minipage}
\vspace{-.5em}
\caption{Comparison of FMMD and its variations on MNIST (note that FMMD and FMMD-W activate the same links and thus have the same per-iteration time, so are FMMD-P and FMMD-WP).  
} \label{fig:FMMD_T_mnist}
\vspace{-.5em}
\end{figure}

\begin{figure}[t!]
\begin{minipage}{.495\linewidth}
\centerline{
\includegraphics[width=1\linewidth]{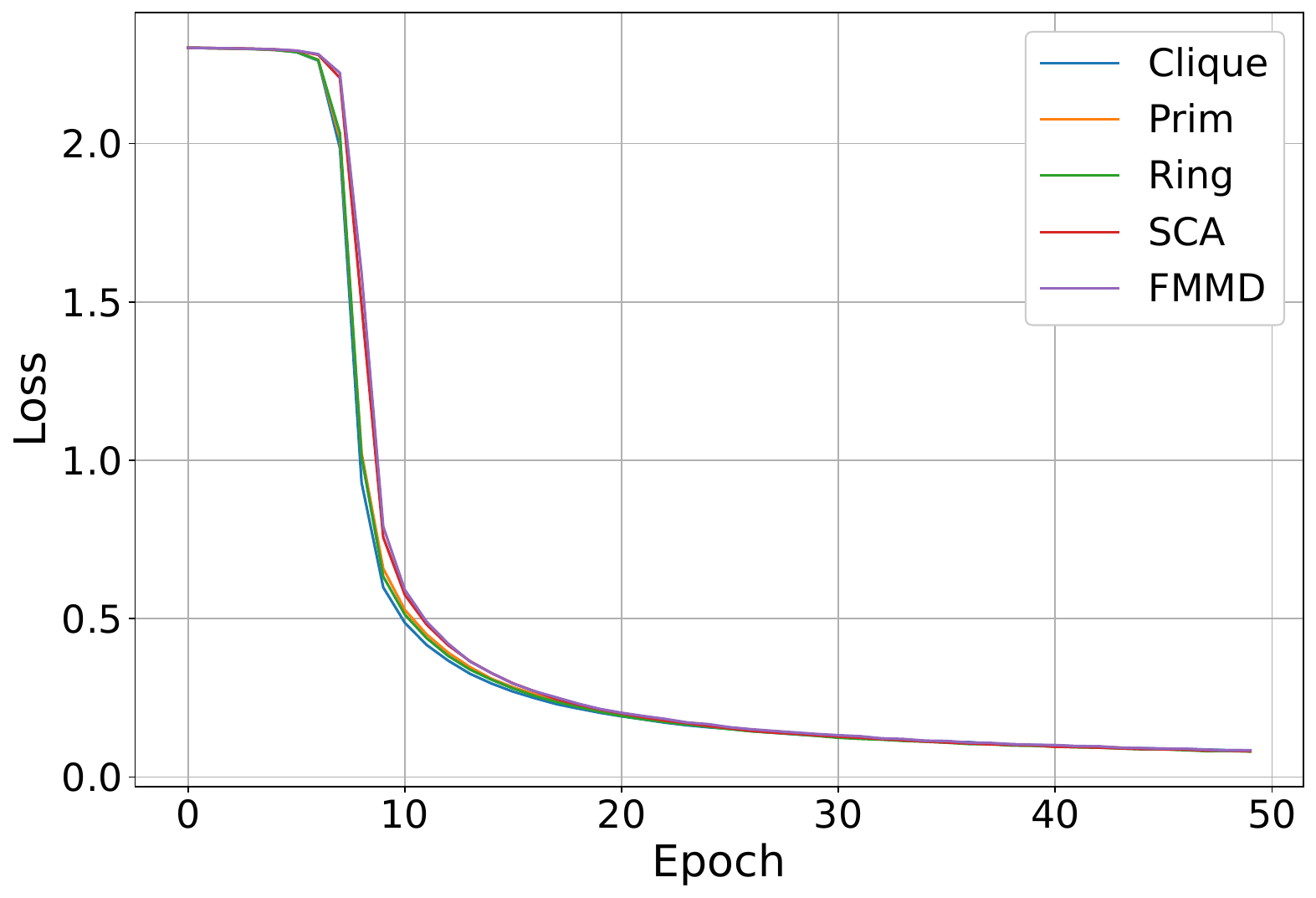}}
\vspace{-.1em}
\end{minipage}
\begin{minipage}{.495\linewidth}
\centerline{
\includegraphics[width=1\linewidth]{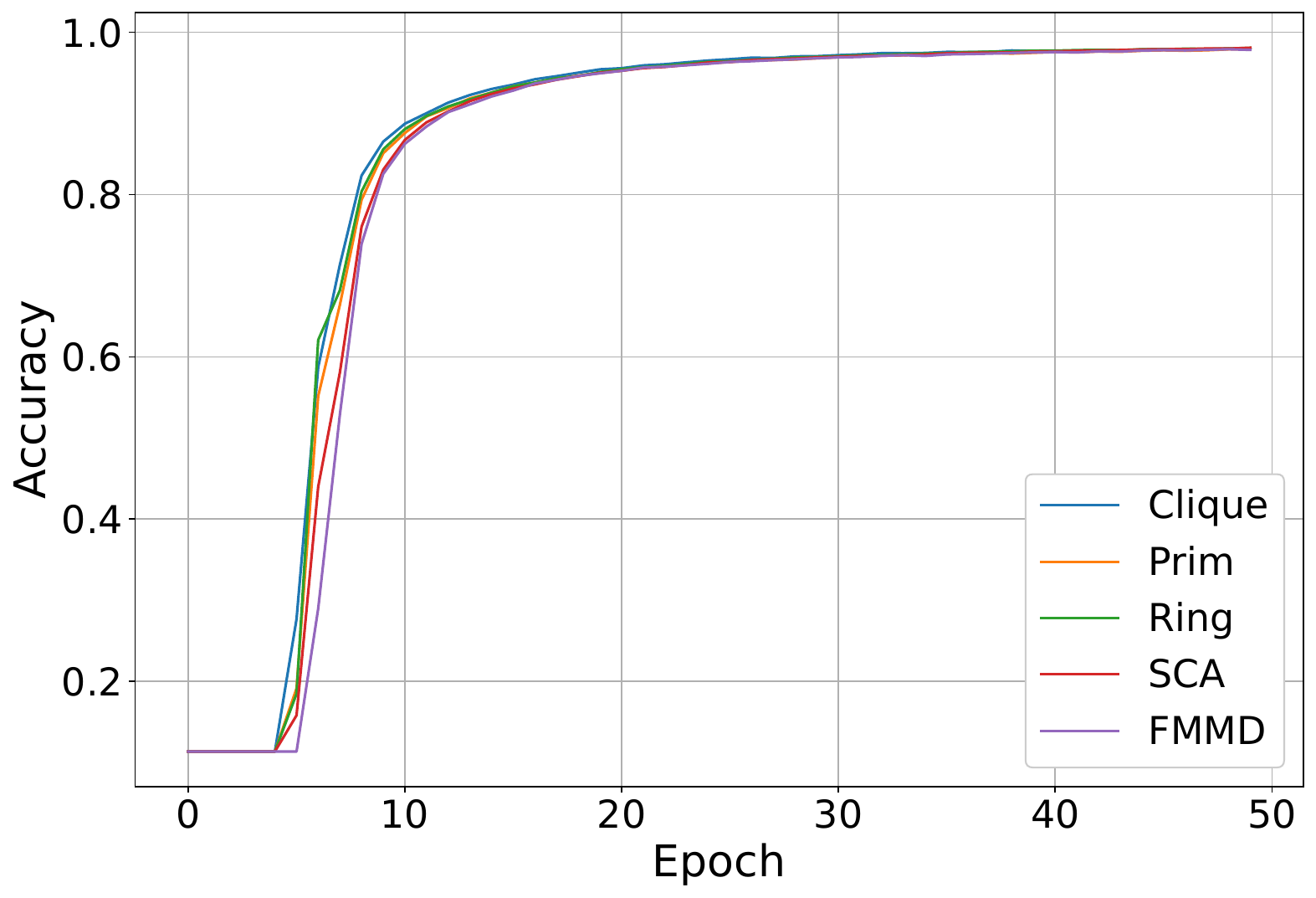}}
\vspace{-.1em}
\end{minipage}
\begin{minipage}{.495\linewidth}
\centerline{
\includegraphics[width=1\linewidth]{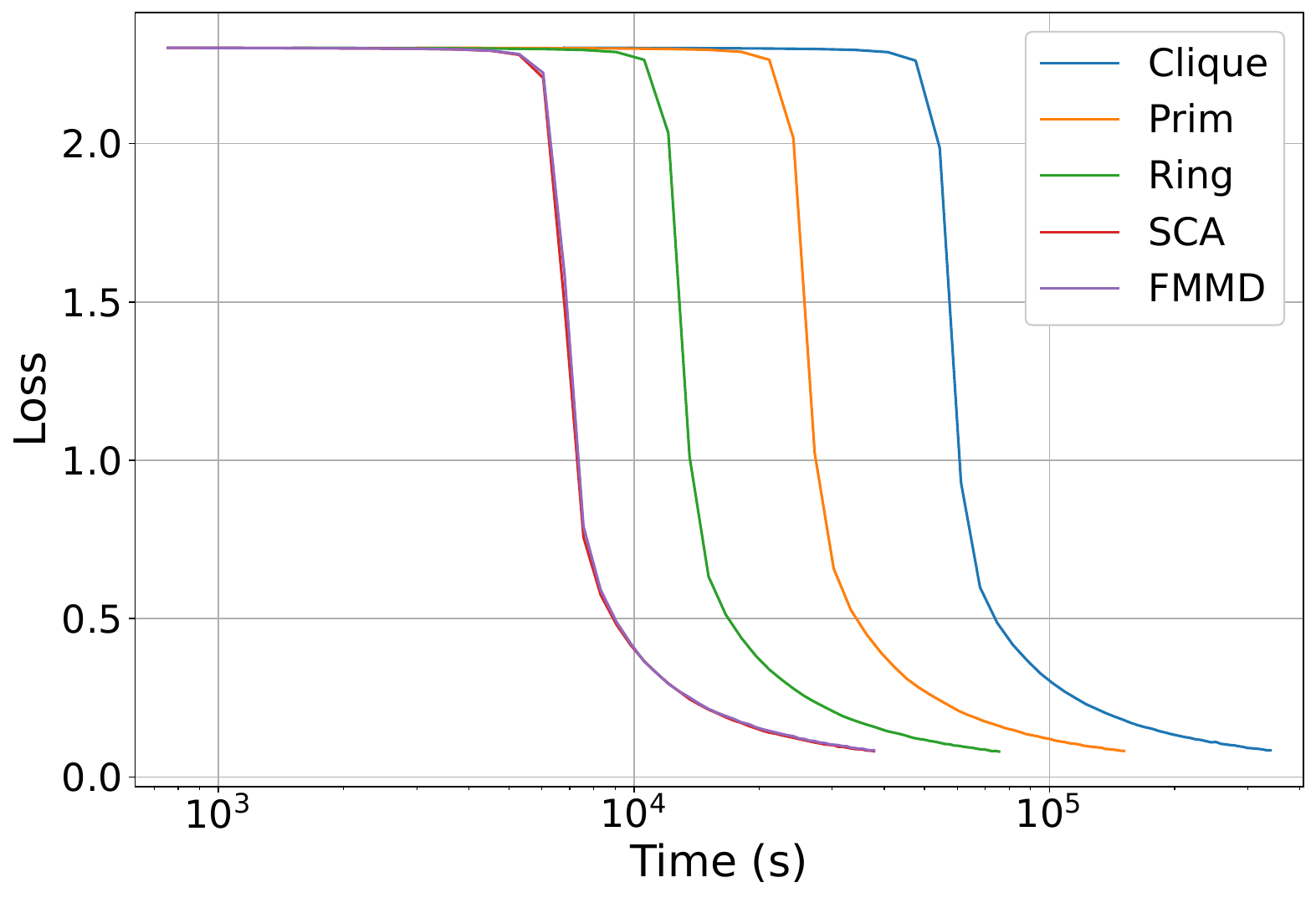}}
\vspace{-.1em}
\end{minipage}
\begin{minipage}{.495\linewidth}
\centerline{
\includegraphics[width=1\linewidth]{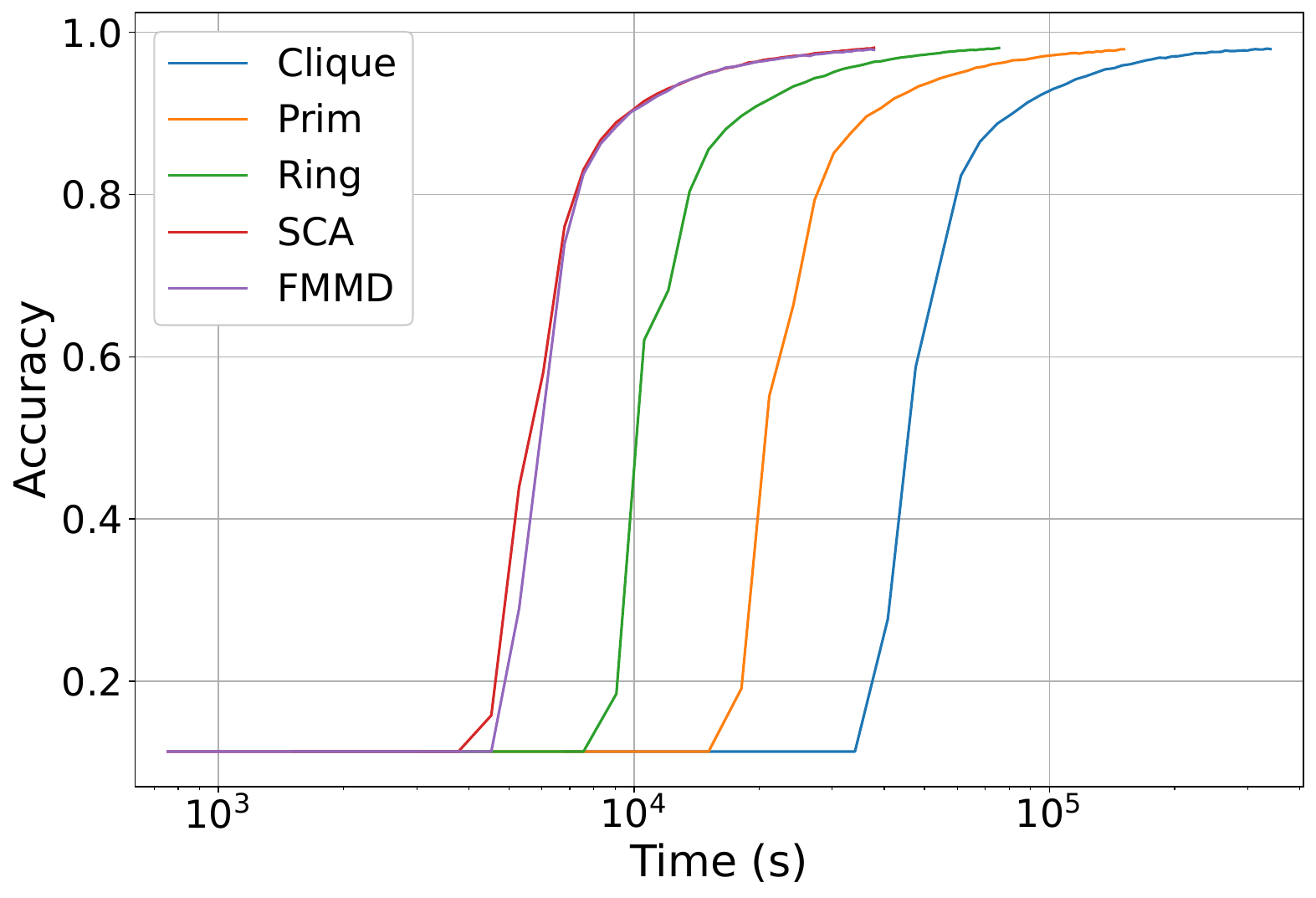}}
\vspace{-.1em}
\end{minipage}
\begin{minipage}{.495\linewidth}
\centerline{
\includegraphics[width=1\linewidth]{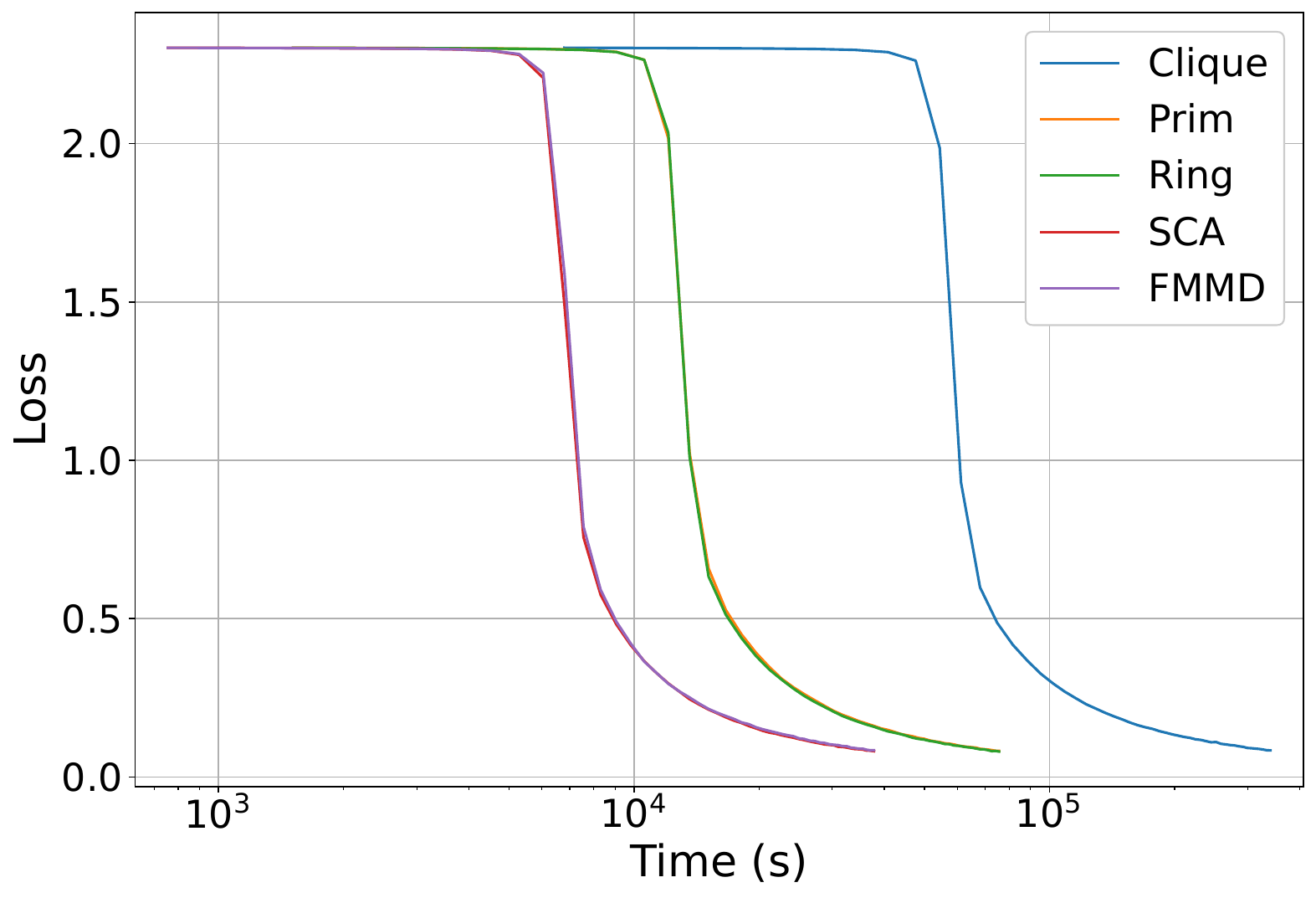}}
\vspace{-.1em}
\end{minipage}
\vspace{-.1em}
\begin{minipage}{.495\linewidth}
\centerline{
\includegraphics[width=1\linewidth]{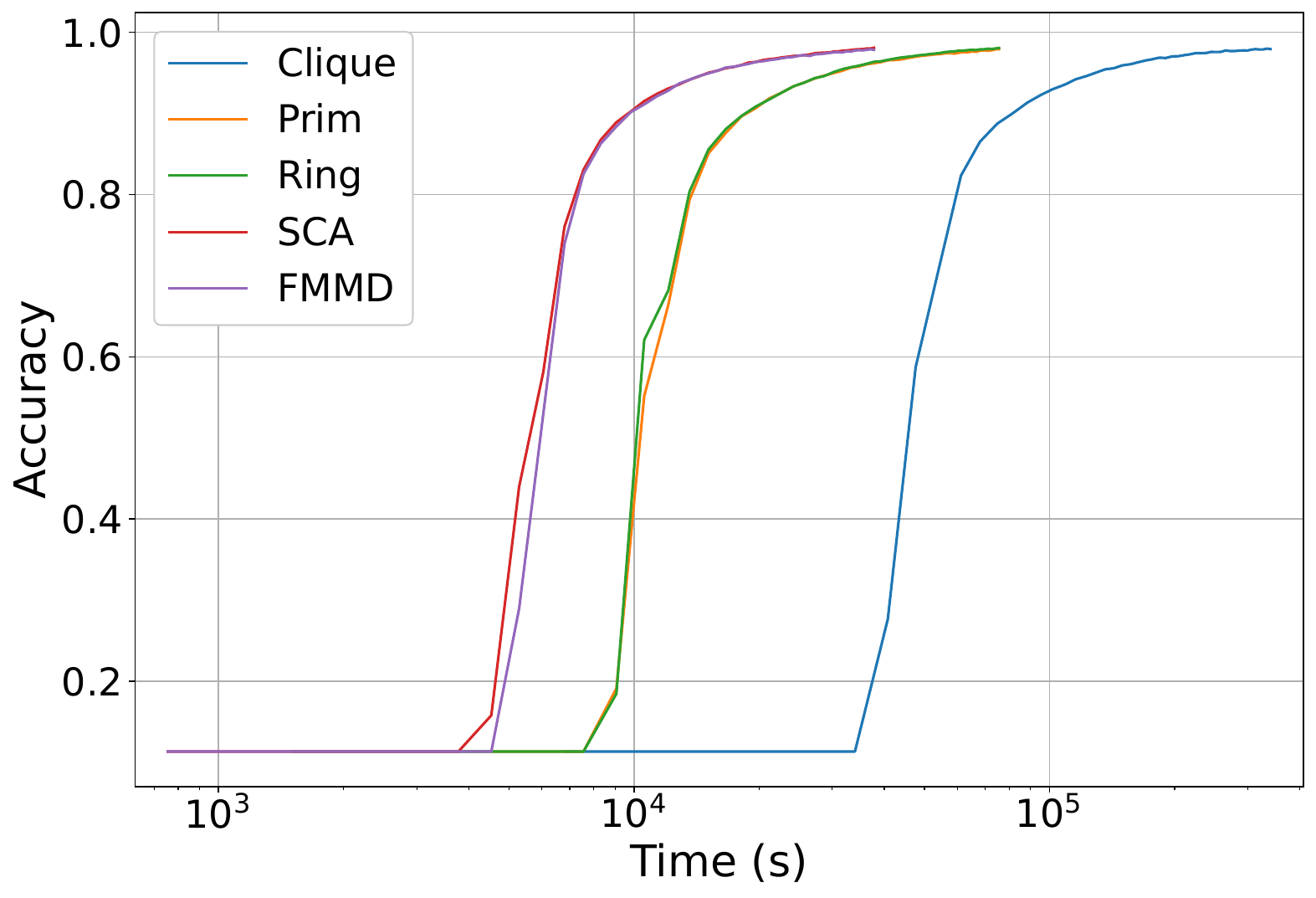}}
\vspace{-.1em}
\end{minipage}
\vspace{-1em}
\caption{Comparison with benchmarks on MNIST: first row - loss/accuracy over \#epochs, second row - loss/accuracy over $\overline{\tau}$, third row - loss/accuracy over $\tau$.  
} \label{fig:roofnet_no_error_mnist}
\vspace{-.0em}
\end{figure}

\begin{table}[t!]
\small
    \centering
    \begin{tabular}{c|c|c|c|c|c}
         &  SCA & FMMD & Prim & Ring & Clique \\
         \hline
routing by \eqref{eq:min-time} & 2.94  & 2.53 & 2.92 & 4.8205 & $>1000$\footnotemark\\       
routing by \eqref{eq:min-time simple} & 1.55  & 0.745 & 0.3 & 0.32 & 0.47
    \end{tabular}
    \vspace{.5em}
    \caption{Running times (s) for MNIST dataset (including link activation design, link weight design, and overlay routing). 
    }
    \label{tab: run_time_algs_MNIST}
    \vspace{-1em}
\end{table} 
\footnotetext{When using Gurobi to solve \eqref{eq:min-time}, the solver does not converge in 1,000 s.}

\fi

\end{document}